\documentclass[11pt]{article}
\usepackage{adjustbox}
\usepackage{amsmath,amssymb,amsthm}
\usepackage{booktabs}
\usepackage{xcolor}

\usepackage{array}
\usepackage{algorithm}
\usepackage{algpseudocode}
\usepackage{url}
\usepackage[colorlinks=true,linkcolor=blue,citecolor=blue,urlcolor=blue]{hyperref}
\newtheorem{theorem}{Theorem}[section]
\newtheorem{proposition}[theorem]{Proposition}
\newtheorem{lemma}[theorem]{Lemma}

\theoremstyle{definition}
\newtheorem{definition}[theorem]{Definition}
\newtheorem{remark}[theorem]{Remark}
\newtheorem{example}[theorem]{Example}
\newcommand{\F}{\mathbb{F}}
\newcommand{\Hull}{\operatorname{Hull}}
\newcommand{\Aut}{\operatorname{Aut}}
\newcommand{\wt}{\operatorname{wt}}
\newcommand{\dall}{d_{\mathrm{all}}}
\newcommand{\dlcd}{d_{\mathrm{LCD}}}
\newcommand{\Mon}{\operatorname{Mon}}
\newcommand{\Nn}[3]{N^{(#1)}_{#2,#3}}
\newcommand{\gb}[3]{\left[\begin{smallmatrix} #1 \\ #2 \end{smallmatrix}\right]_{#3}}

\newcommand{\NB}{19}
\newcommand{\NH}{13}
\newcommand{\NT}{15}
\newcommand{\SetBone}{(5,2), (6,2), (6,3), (7,3), (7,4), (8,3),\\
 &\qquad (8,4), (10,5), (11,2), (11,3), (11,4), (12,2),\\
 &\qquad (12,4), (12,7), (12,8), (13,3), (13,8), (13,9),\\
 &\qquad (14,3), (14,4), (14,5), (14,9), (14,10), (15,3),\\
 &\qquad (15,5), (15,10), (15,11), (16,4), (16,7), (16,11),\\
 &\qquad (17,2), (17,5), (17,6), (17,11), (18,2), (18,3),\\
 &\qquad (18,5), (18,6), (18,7), (18,9), (19,7), (19,8),\\
 &\qquad (19,13)}
\newcommand{\SetBtwo}{(15,4), (16,5)}

\newcommand{\SetTone}{(4,2), (7,2), (8,2), (8,3), (9,3), (9,4),\\
 &\qquad (9,5), (10,4), (10,5), (10,6), (11,2), (11,3),\\
 &\qquad (11,5), (11,6), (11,8), (12,2), (12,3), (12,5),\\
 &\qquad (12,6), (12,9), (13,3), (13,10), (14,3), (15,2),\\
 &\qquad (15,4)}

\newcommand{\SetHone}{(4,2), (5,2), (5,3), (6,3), (9,2), (10,2),\\
 &\qquad (12,6)}

\title{Classification of binary, ternary and quaternary Hermitian LCD codes}
\author{Keita Ishizuka\thanks{Information Technology R\&D Center, Mitsubishi Electric Corporation, Kamakura, Japan. Email: \texttt{keitaishizuka1994@gmail.com}}}
\date{}

\begin{document}
\maketitle

\begin{abstract}
We extend the complete classifications of linear complementary dual (LCD) codes from length
\(15\) to length \(\NB\) in the binary case, from length \(10\) to length \(\NT\) in the ternary
case and from length \(12\) to length \(\NH\) in the quaternary Hermitian case. The classification
rests on a reduction of the monomial equivalence of LCD codes to graph isomorphism on the
coordinates, combined with canonical augmentation.

\medskip\noindent
\textbf{Keywords:} LCD code, classification, code equivalence, graph isomorphism, closure, mass formula

\noindent
\textbf{MSC 2020:} 94B05
\end{abstract}

\section{Introduction}

A linear code $C$ is a \emph{linear complementary dual} (LCD) code if $C \cap C^\perp = \{0\}$,
where $C^\perp$ is the dual code with respect to the Euclidean inner product or, for codes over
$\F_{p^2}$, with respect to the Hermitian inner product~\cite{massey1992}. LCD codes have
attracted attention through their application to counter-measures against side-channel
attacks~\cite{carletguilley2016}, and the dimension of $C \cap C^\perp$ governs the complexity of
the algorithms that decide code equivalence~\cite{sendrier1997,sendrier2000,bos2019}. Moreover, a
quaternary Hermitian LCD $[n,k,d]$ code yields a maximal-entanglement entanglement-assisted
quantum error-correcting $[[n,k,d;n-k]]$ code~\cite[Corollary~3.2]{gjg2018}. The
classification of LCD codes up to equivalence is a meaningful problem only when being LCD is
an invariant of the equivalence class. By Carlet, Mesnager, Tang, Qi and
Pellikaan~\cite{cmtqp2018} this happens exactly for binary and ternary codes with the Euclidean
inner product and for quaternary codes with the Hermitian inner product: for $q > 3$ every
linear code over $\F_q$ is equivalent to a Euclidean LCD code, and for $p^2 > 4$ every linear code
over $\F_{p^2}$ is equivalent to a Hermitian LCD code (Section~\ref{sec:prelim}). The known
complete classifications in these three cases are summarised in Table~\ref{tab:known}.
These classifications decide equivalence either with the function
\texttt{IsIsomorphic} of Magma, which is based on Leon's partition backtrack over a set of
codewords invariant under all automorphisms~\cite{leon1982}, or by a graph isomorphism test on a
digraph whose vertices include all $q^k$ codewords~\cite[Section~7.3.3]{kaskiostergard2006}; in
both cases a set of codewords has to be computed first, which is at least as hard as computing the
minimum distance or requires the enumeration of all $q^k$ codewords (Section~\ref{sec:cost}).

\begin{table}[htbp]
\centering\small
\caption{Complete classifications of LCD codes (all dimensions unless stated otherwise); $\dagger$: only the numbers of inequivalent codes are given, without representatives}\label{tab:known}
\begin{tabular}{lll}\hline
codes & previously known & this paper\\\hline
binary & $n \le 13$~\cite{arayaharada2019}, $n \le 15$~\cite{haradatanaka2026}; & $n \le \NB$\\
 & $n \le 30$ for $k = 2$, $n \le 25$ for $k = 3$$^\dagger$~\cite{arayaharada2019}; & \\
 & $n \le 100$ for $k = 2$, $n \le 50$ for $k = 3$, & \\
 & $n \le 22$ for $k = 4$$^\dagger$~\cite{haradatanaka2026}; & \\
 & $n \le 20$ for $3 \le k \le 5$, $n \le 18$ for $k = 6$, & \\
 & $n \le 17$ for $k = 7$, $n \le 16$ for $k = 8$, $d, d^\perp \ge 2$$^\dagger$~\cite{bbo2026} & \\
ternary & $n \le 10$~\cite{arayaharada2019}; $n \le 20$ for $k = 3$, & $n \le \NT$\\
 & $n \le 15$ for $k = 4$, $n \le 14$ for $k = 5$, & \\
 & $n \le 13$ for $k \le 9$, $d, d^\perp \ge 2$$^\dagger$~\cite{bbo2026} & \\
quaternary & $n \le 12$; $n \le 15$ for $k \le 4$; $n \le 70$ for $k = 2$; & $n \le \NH$\\
Hermitian & $n \le 20$ for $k = 3$~\cite{arayaharada2027} & \\\hline
\end{tabular}
\end{table}

Our contributions are as follows. Combining the closure, which preserves the hull for ternary
Euclidean and quaternary Hermitian codes~\cite[Proposition~7]{sendriersimos2013}, with the
reduction of~\cite{bos2019}, we decide monomial equivalence of LCD codes over $\F_3$ and $\F_4$ by
graph isomorphism on $(q-1)n$ vertices; to our knowledge this is the first use of this reduction
in a classification, and Section~\ref{sec:cost} shows that it removes the equivalence test as the
bottleneck. We show that LCD codes can be generated within the
class of LCD codes, the type of deletion being determined by the diagonal of the projector
(Lemma~\ref{lem:puncture}); this extends results of Bouyuklieva~\cite{bouyuklieva2021} and of the
author~\cite{ishizuka2023} and removes the need for codes with nontrivial hull as parents. We
combine this generation with canonical augmentation~\cite{mckay1998}, using the canonical labelling
of the same projector digraph that decides equivalence, and prove that every equivalence class is
produced exactly once (Theorem~\ref{thm:ca}). Thus a single object, the projector of an LCD code,
provides the complete invariant, is compatible with the closure, and determines the generation.
With this procedure we extend the complete classifications to lengths $\NB$, $\NT$ and $\NH$
(Table~\ref{tab:counts}); generator matrices of representatives of all equivalence classes are
available electronically~\cite{data}. Using
a rigorous lower bound from the mass formula, we show that the next lengths ($20$, $16$ and $14$ for
$q = 2$, $3$ and $4$) would produce far more generator matrices than can be stored and shared under
today's computational resources (Section~\ref{sec:beyond}), and thereby conclude that, with present means of
storage and distribution, the lengths classified here are the limit of complete classification. Furthermore, as a consequence of the
classification we obtain the numbers of inequivalent LCD codes with each minimum weight and the
largest minimum weights $\dlcd(n,k)$ (Section~\ref{sec:dmin}).

This paper is organized as follows. Section~\ref{sec:prelim} fixes notation and recalls the mass
formulas. Section~\ref{sec:method} reduces the equivalence of LCD codes to graph isomorphism,
compares the cost with the previous approaches, describes the generation of LCD codes and the
verification of the results. Section~\ref{sec:results} gives the numbers of inequivalent LCD codes
and the lower bound that limits complete classification, Section~\ref{sec:dmin} the largest
minimum weights, and Section~\ref{sec:concl} concludes the paper.

\section{Preliminaries}\label{sec:prelim}

\subsection{Codes}

A linear $[n,k]$ code over $\F_q$ is a $k$-dimensional subspace $C$ of $\F_q^n$; its elements
are codewords and a $k \times n$ matrix whose rows form a basis of $C$ is a generator matrix
of~$C$. The (Hamming) weight $\wt(x)$ of $x \in \F_q^n$ is the number of nonzero coordinates
of~$x$, and the minimum weight $d(C)$ of $C \ne \{0\}$ is the smallest weight of a nonzero
codeword; an $[n,k,d]$ code is an $[n,k]$ code with minimum weight~$d$. For a coordinate $i$, the
punctured code $C^{i}$ is obtained from $C$ by deleting coordinate~$i$, and the shortened code
$C_{i}$ is obtained from the subcode $\{c \in C : c_i = 0\}$ by deleting coordinate~$i$.

The Euclidean inner product on $\F_q^n$ is $\langle x,y \rangle_E = \sum_i x_i y_i$, and the
Hermitian inner product on $\F_{p^2}^n$ is $\langle x,y \rangle_H = \sum_i x_i y_i^{p}$. We write
$\langle x,y \rangle$ for the inner product under consideration and $\overline{a} = a$ for the
Euclidean and $\overline{a} = a^{p}$ for the Hermitian inner product, so that
$\langle x,y \rangle = \sum_i x_i\overline{y_i}$ throughout; $\overline{A}$ denotes the entrywise
conjugate of a matrix~$A$. The dual code of $C$ is
$C^\perp = \{ y \in \F_q^n : \langle x,y \rangle = 0 \text{ for all } x \in C\}$, an $[n,n-k]$
code; its minimum weight $d^\perp$ is the dual distance of~$C$. The hull of $C$ is
$\Hull(C) = C \cap C^\perp$, and $C$ is LCD if $\Hull(C) = \{0\}$. If $G$ is a
generator matrix of $C$, then
\begin{equation}\label{eq:hullrank}
\dim \Hull(C) = k - \operatorname{rank}(G \overline{G}^{T}),
\end{equation}
so $C$ is LCD if and only if the Gram matrix $G\overline{G}^T$ is nonsingular~\cite{massey1992}.
Puncturing and shortening are dual to each other: $(C^\perp)_i = (C^{i})^\perp$ and
$(C^\perp)^{i} = (C_i)^\perp$~\cite[Theorem~1.5.7]{huffmanpless2003}.
Unless stated otherwise, the inner product is the Euclidean one for $q \in \{2,3\}$ and the
Hermitian one for $q = 4$; for other $q$ it is either of them.

The monomial group $\Mon(n,q) = (\F_q^*)^n \rtimes S_n$ acts on $\F_q^n$ by permuting and
rescaling coordinates. Two codes are \emph{monomially equivalent} if one is the image of the other
under an element of $\Mon(n,q)$, and \emph{permutation equivalent} if this element can be chosen to
be a permutation of the coordinates. For $q \in \{2,3,4\}$ every $a \in \F_q^*$
satisfies $a\overline{a} = 1$, so every element of $\Mon(n,q)$ preserves the inner product; hence
$(Cg)^\perp = C^\perp g$ and $\Hull(Cg) = \Hull(C)g$, and the LCD property is an invariant of the
monomial equivalence class. For $q > 3$ (Euclidean) and $q = p^2 > 4$ (Hermitian) this fails, and
in fact every linear code is monomially equivalent to an LCD code~\cite{cmtqp2018}, whereas
permutations preserve the inner product for every $q$. Accordingly, we let $\mathcal{G}_n =
\Mon(n,q)$ if $q \in \{2,3,4\}$ and let $\mathcal{G}_n$ be the group of permutations of the $n$
coordinates otherwise; two codes are \emph{equivalent} if one is the image of the other under an
element of $\mathcal{G}_n$, and the automorphism group $\Aut(C)$ is the stabiliser of $C$ in
$\mathcal{G}_n$. For $q = 2$ both groups coincide. Since
$\Hull(C^\perp) = \Hull(C)$, duality is a bijection between the equivalence classes of LCD
$[n,k]$ and LCD $[n,n-k]$ codes. Let $\Nn{q}{n}{k}$ denote the number of inequivalent LCD $[n,k]$
codes over $\F_q$; then $\Nn{q}{n}{k} = \Nn{q}{n}{n-k}$.

\subsection{Mass formula}\label{sec:mass}

Let $L_q[n,k]$ denote the number of all (not only inequivalent) LCD $[n,k]$ codes over
$\F_q$. Carlet, Mesnager, Tang and Qi~\cite{cmtq2019} determined $L_q[n,k]$ in closed form for
the Euclidean inner product by studying the action of the orthogonal group on LCD codes. For
$q = 2$ and $0 < k < n$,
\[
L_2[n,k] = \begin{cases}
2^{(nk-k^2+n-1)/2} \gb{n/2-1}{(k-1)/2}{4}, & k \text{ odd},\ n \text{ even},\\[3pt]
2^{(k+1)(n-k)/2} \gb{(n-1)/2}{(k-1)/2}{4}, & k \text{ odd},\ n \text{ odd},\\[3pt]
2^{k(n-k+1)/2} \gb{(n-1)/2}{k/2}{4}, & k \text{ even},\ n \text{ odd},\\[3pt]
2^{k(n-k)/2}\Bigl(2^{n-k}\gb{n/2-1}{k/2-1}{4} + \gb{n/2-1}{k/2}{4}\Bigr), & k \text{ even},\ n \text{ even},
\end{cases}
\]
and for $q = 3$, with $\chi$ the quadratic character of $\F_3$,
\[
L_3[n,k] = \begin{cases}
3^{(k(n-k)-1)/2}\bigl(3^{n/2} - \chi((-1)^{n/2})\bigr) \gb{n/2-1}{(k-1)/2}{9}, & k \text{ odd},\ n \text{ even},\\[3pt]
3^{(k+1)(n-k)/2} \gb{(n-1)/2}{(k-1)/2}{9}, & k \text{ odd},\ n \text{ odd},\\[3pt]
3^{k(n-k+1)/2} \gb{(n-1)/2}{k/2}{9}, & k \text{ even},\ n \text{ odd},\\[3pt]
3^{k(n-k)/2} \gb{n/2}{k/2}{9}, & k \text{ even},\ n \text{ even},
\end{cases}
\]
where $\gb{a}{b}{Q}$ is the Gaussian binomial coefficient. (An earlier formula, for all hull
dimensions, is due to Sendrier~\cite{sendrier1997}.) For $q = 4$, the number of Hermitian LCD $[n,k]$ codes is
\[
L_4[n,k] = 2^{k(n-k)} \prod_{i=1}^{k} \frac{2^{n-k+i} - (-1)^{n-k+i}}{2^{i} - (-1)^{i}}
\]
by Li, Shi, Li and Ling~\cite[Theorem~3.4]{lsll2024}, who showed that the unitary group acts
transitively on Hermitian LCD codes of a given dimension.

Let $\mathcal{B}$ be a set of pairwise inequivalent LCD $[n,k]$ codes over $\F_q$. Since the
number of codes equivalent to $C$ is $|\Mon(n,q)|/|\Aut(C)|$,
\begin{equation}\label{eq:masscheck}
\sum_{C \in \mathcal{B}} \frac{(q-1)^n\, n!}{|\Aut(C)|} \le L_q[n,k],
\end{equation}
with equality if and only if $\mathcal{B}$ contains a representative of every equivalence
class.

\section{Method}\label{sec:method}

\subsection{Equivalence of LCD codes and graph isomorphism}\label{sec:herm}

Let $C$ be an LCD $[n,k]$ code over $\F_q$ with generator matrix $G$. The matrix
\[
\Pi_C = \overline{G}^{T}(G\overline{G}^{T})^{-1}G \in \F_q^{n \times n}
\]
is the orthogonal projector onto $C$ along $C^\perp$ (acting on row vectors); it does not depend
on the choice of $G$, and $C$ is the row space of $\Pi_C$. Moreover $\Pi_C^T = \overline{\Pi_C}$; in
particular $\Pi_C$ is symmetric for the Euclidean inner product. For a permutation matrix $P$ one has
$\Pi_{CP} = P^{T}\Pi_C P$. Hence two LCD codes $C$ and $D$ are permutation equivalent if and only
if there is a permutation matrix $P$ with $\Pi_D = P^T \Pi_C P$, that is, if and only if the
complete digraphs on $\{1,\dots,n\}$ with vertex colours $(\Pi_C)_{ii}$ and arc colours
$(\Pi_C)_{ij}$ are isomorphic, and the permutation part of $\Aut(C)$ is the automorphism group of
this digraph. This is the theorem of Bardet, Otmani and Saeed-Taha~\cite[Theorem~5]{bos2019},
which is stated for the Euclidean inner product over an arbitrary field; the same argument applies
to the Hermitian inner product. A canonical labelling of the coloured digraph, computed by
nauty~\cite{mckaypiperno2014}, gives a canonical form of $C$ under permutation equivalence.

For $q \in \{3,4\}$ we reduce monomial equivalence to permutation equivalence. The
\emph{closure} of $C \subseteq \F_q^n$ is the code
\[
\widetilde{C} = \{ (a c_i)_{1 \le i \le n,\ a \in \F_q^*} : c \in C \} \subseteq \F_q^{(q-1)n},
\]
whose coordinates are indexed by the pairs $(i,a)$ and grouped into the blocks
$\{(i,a) : a \in \F_q^*\}$. For $x \in \F_q^n$ write $\widetilde{x}$ for the corresponding vector of
$\widetilde{C}$. Every monomial map $g \in \Mon(n,q)$ is given by scalars $a_1,\dots,a_n \in \F_q^*$ and a
permutation $\sigma \in S_n$: it first multiplies the $i$-th coordinate by $a_i$, that is,
$c \mapsto (a_1c_1,\dots,a_nc_n)$, and then moves the $i$-th coordinate to position $\sigma(i)$. The
coordinate $(\sigma(i), b)$ of $\widetilde{cg}$ is then $b\,a_i c_i$, which is the coordinate
$(i, a_i b)$ of $\widetilde{c}$. Hence $\widetilde{Cg}$ is obtained from $\widetilde{C}$ by the
permutation $(i, a_i b) \mapsto (\sigma(i), b)$ of the coordinates of the closure, which maps
blocks to blocks and acts on each block by multiplication by a scalar; conversely, every such
permutation comes from a monomial map. Therefore two codes are monomially equivalent if and only
if their closures are equivalent under a permutation that maps blocks to blocks and acts on each
block by multiplication by a scalar; Sendrier and Simos~\cite{sendriersimos2013} showed that this
restriction may even be dropped, so that monomial equivalence of codes is equivalent to
permutation equivalence of their closures.

\begin{lemma}[{\cite[Propositions~6 and~7]{sendriersimos2013}}]\label{lem:closure}
\begin{enumerate}
\item For the Euclidean inner product on $\F_q^n$,
$\langle \widetilde{x}, \widetilde{y} \rangle_E = s\,\langle x,y \rangle_E$ with
$s = \sum_{a \in \F_q^*} a^2$, and $s \ne 0$ if and only if $q \in \{2,3\}$.
\item For the Hermitian inner product on $\F_{p^2}^n$,
$\langle \widetilde{x}, \widetilde{y} \rangle_H = h\,\langle x,y \rangle_H$ with
$h = \sum_{a \in \F_{p^2}^*} a^{p+1}$, and $h \ne 0$ if and only if $p = 2$.
\end{enumerate}
Consequently, for binary and ternary Euclidean and for quaternary Hermitian codes,
$\Hull(\widetilde{C}) = \widetilde{\Hull(C)}$, and in particular $\widetilde{C}$ is LCD if and only
if $C$ is LCD. In all other cases the closure of every code is self-orthogonal.
\end{lemma}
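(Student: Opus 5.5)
The plan is a direct computation on the coordinates of the closure, followed by a linear-algebra argument for the hull. For $x,y \in \F_q^n$ the coordinates of $\widetilde{x}$ and $\widetilde{y}$ at position $(i,a)$ are $a x_i$ and $a y_i$. Hence in the Euclidean case
\[
\langle \widetilde{x},\widetilde{y}\rangle_E=\sum_{i}\sum_{a\in\F_q^*} a^2 x_i y_i = s\,\langle x,y\rangle_E .
\]
In the Hermitian case the conjugate of $a y_i$ is $a^p y_i^p$, so each term contributes $a^{p+1}x_i y_i^p$, and summing gives $h\,\langle x,y\rangle_H$.

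Next I would evaluate the constants $s$ and $h$.
\begin{itemize}
\item For $s$: the sum $\sum_{a\in\F_q^*} a^m$ equals $-1$ if $(q-1)\mid m$ and $0$ otherwise. This is the standard fact that a nontrivial character sum over the cyclic group $\F_q^*$ vanishes. With $m=2$ we need $(q-1)\mid 2$, that is $q\in\{2,3\}$, and then $s=-1\neq 0$.
\item For $h$: take $m=p+1$ and $q=p^2$. Then $(p^2-1)\mid(p+1)$ forces $p-1=1$, so $p=2$, and then $h=-1=1\neq0$ in $\F_4$.
\end{itemize}
Alternatively, one can see this directly: $a\mapsto a^{p+1}$ is the norm map onto $\F_p^*$, each fibre has $p+1$ elements, so $h=(p+1)\sum_{b\in\F_p^*}b$. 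This vanishes whenever $p>2$, since then $p\mid(p+1)(p-1)\cdot$ (sum $=0$ for $p>2$) or more simply because $\sum_{b\in\F_p^*} b=0$ for $p>2$.

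For the hull statement, write $\phi\colon x\mapsto\widetilde{x}$. This map is linear and injective, and $\widetilde{C}=\phi(C)$, so every element of $\widetilde{C}$ has the form $\widetilde{x}$ with $x\in C$. Fix $x\in C$. Then $\widetilde{x}\in\widetilde{C}^\perp$ if and only if $\langle\widetilde{x},\widetilde{y}\rangle=0$ for all $y\in C$. When the constant ($s$ or $h$) is nonzero, this happens if and only if $\langle x,y\rangle=0$ for all $y\in C$, that is, $x\in C^\perp$. Hence
\[
\Hull(\widetilde{C})=\phi(C\cap C^\perp)=\widetilde{\Hull(C)},
\]
and by injectivity of $\phi$ the hull of $\widetilde{C}$ is trivial exactly when that of $C$ is. In the remaining cases the constant is $0$, so every pair of codewords of $\widetilde{C}$ is orthogonal, which means $\widetilde{C}$ is self-orthogonal.

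The only real subtlety is the hull step. One must remember that membership in $\widetilde{C}^\perp$ is tested against $\widetilde{C}$ itself and not against all of $\F_q^{(q-1)n}$. Because of this, the identity $\Hull(\widetilde{C})=\widetilde{\Hull(C)}$ needs only pairings between closures of vectors of $C$, and the argument never has to describe $\widetilde{C}^\perp$ in full. The computation of $s$ and $h$ is routine by the character-sum lemma.
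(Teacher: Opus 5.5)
Your proof is correct, and since the paper gives no proof of its own but cites Sendrier and Simos, your argument (expanding $\langle\widetilde{x},\widetilde{y}\rangle$ blockwise, evaluating the power sum $\sum_{a\in\F_q^*}a^m$ over the cyclic group $\F_q^*$, and deducing $\Hull(\widetilde{C})=\widetilde{\Hull(C)}$ from the injectivity of $x\mapsto\widetilde{x}$) is the standard direct proof of the cited result. The one thing to remove is the first clause of your alternative norm-map remark: as written, $p\mid(p+1)(p-1)$ is false, and the correct reason is the one you give next, namely $\sum_{b\in\F_p^*}b=0$ for $p>2$.
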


Thus, for $q \in \{3,4\}$, the projector digraph of the LCD code $\widetilde{C}$ of length
$(q-1)n$ is a complete invariant of the monomial equivalence class of $C$, and the automorphism
group of this digraph, restricted to permutations that map blocks to blocks and act on each block
by multiplication by scalars, is isomorphic to $\Aut(C)$. In the implementation this restriction
is imposed by joining the vertices of each block by an edge ($q = 3$) or by a directed $3$-cycle
$(i,a) \to (i,\omega a)$ ($q = 4$) in an additional colour.

\begin{example}
Let $C = \langle (1,1,0) \rangle$ and $D = \langle (1,2,0) \rangle$ be ternary $[3,1]$ codes; both are
LCD, since $\langle x, x \rangle = 2 \ne 0$ for their generators. The code $D$ is obtained from $C$ by
multiplying the second coordinate by $2$, so $C$ and $D$ are monomially equivalent, but they are not
permutation equivalent. Accordingly, the projectors
\[
\Pi_C = \begin{pmatrix} 2&2&0\\2&2&0\\0&0&0 \end{pmatrix}, \qquad
\Pi_D = \begin{pmatrix} 2&1&0\\1&2&0\\0&0&0 \end{pmatrix}
\]
give non-isomorphic coloured graphs: the edge between the first two vertices has colour $2$ for $C$
and colour $1$ for $D$. With the coordinates of the closure ordered as
$(1,1),(1,2),(2,1),(2,2),(3,1),(3,2)$, the closures $\widetilde{C}$ and $\widetilde{D}$ are
generated by $(1,2,1,2,0,0)$ and $(1,2,2,1,0,0)$, which are LCD, and the transposition of the
coordinates $(2,1)$ and $(2,2)$, induced by the monomial map above, maps $\widetilde{D}$ to
$\widetilde{C}$. Hence the projector graphs of $\widetilde{C}$ and $\widetilde{D}$ are isomorphic,
as they must be: the reduction from monomial equivalence to permutation equivalence and then to
graph isomorphism decides the monomial equivalence of $C$ and $D$ by an isomorphism test on graphs
with six vertices.
\end{example}

Algorithm~\ref{alg:canon} summarises the computation of a canonical form of an LCD code by this
reduction.

\begin{algorithm}[htbp]
\caption{Canonical form of an LCD code}\label{alg:canon}
\begin{algorithmic}[1]
\Require a generator matrix $G$ of an LCD $[n,k]$ code $C$ over $\F_q$
\Ensure a canonical labelling $\lambda$, the canonical form $F(C)$ and the automorphism group $\Aut(C)$
\State \textbf{if} $q \in \{3,4\}$ \textbf{then} replace $G$ by a generator matrix of $\widetilde{C}$
\State $\Pi \gets \overline{G}^T (G\overline{G}^T)^{-1} G$; build the coloured projector digraph $\Gamma$
\State compute by nauty a canonical labelling $\lambda$ of $\Gamma$ and $\Aut(\Gamma) \cong \Aut(C)$; $F(C) \gets \Gamma^{\lambda}$
\end{algorithmic}
\end{algorithm}

\subsection{Computational cost}\label{sec:cost}

The previous studies decide equivalence either with the Magma function \texttt{IsIsomorphic},
which is based on Leon's algorithm~\cite{leon1982}, or with the codeword
digraph~\cite[Section~7.3.3]{kaskiostergard2006} used in~\cite{arayaharada2019,arayaharada2027}.
We compare the projector digraph of Section~\ref{sec:herm} with these two approaches, first
theoretically and then computationally.

Both previous methods first have to compute a set of codewords. Leon's algorithm uses a set of
codewords invariant under all automorphisms, such as the codewords of minimum weight, whose
computation includes that of the minimum distance; the codeword digraph contains all $q^k$
codewords as vertices, and thus has $q^k + (q-1)n$ vertices. The projector digraph, in contrast,
has only $(q-1)n$ vertices and is computed directly from a generator matrix by elementary linear
algebra, in time $O(k^2 n + kn^2)$. For the ternary $[15,7]$ and the quaternary $[13,6]$ codes, for
instance, the codeword digraph has $2{,}217$ and $4{,}135$ vertices, against $30$ and $39$
vertices (in three colour layers) for the projector digraph of the closure.

To see the effect in practice, Table~\ref{tab:cost} compares the time needed for one canonical
form of a random LCD code of the largest dimension in our range, for the codeword digraph, for the
function \texttt{canonical\_representative} of SageMath (Feulner's algorithm)~\cite{feulner2009},
and for the projector digraph of the closure. Both graphs were processed by the same version of
nauty on a single core; Magma~\cite{magma} was not available to us and is not included. The projector digraph is faster by factors of about $40$, $1{,}400$ and $1{,}800$
than the codeword digraph, a gain that grows with the number $q^k$ of codewords, and by factors of
$4 \times 10^4$ to $5 \times 10^5$ than SageMath. The complete classifications of binary length~$\NB$, ternary
length~$\NT$ and quaternary Hermitian length~$\NH$ required about $240$, $540$ and $200$ CPU hours,
respectively.

\begin{table}[htbp]
\centering\footnotesize
\caption{Time for one canonical form (single core), averaged over random LCD codes}\label{tab:cost}
\begin{tabular}{lccc}\hline
code & codeword digraph & SageMath & projector digraph\\\hline
binary $[19,9]$ & $0.26$~ms & $523$~ms & $5.8~\mu$s\\
ternary $[15,7]$ & $45$~ms & $17{,}980$~ms & $33~\mu$s\\
quaternary Hermitian $[13,6]$ & $78$~ms & $1{,}988$~ms & $44~\mu$s\\\hline
\end{tabular}
\end{table}

\subsection{Generation of LCD codes}

The following lemma, which yields the efficient generation of LCD codes used in this paper, shows
that the LCD codes of length $n$ can be generated from LCD codes of length $n-1$.

\begin{lemma}\label{lem:puncture}
Let $C$ be an LCD $[n,k]$ code over an arbitrary finite field $\F_q$, with respect to the
Euclidean or, for $q = p^2$, the Hermitian inner product, with projector $\Pi = \Pi_C$, and let $i$
be a coordinate. Then $C^{i}$ is an LCD $[n-1,k]$ code if and only if $\Pi_{ii} \ne 1$,
and $C_{i}$ is an LCD $[n-1,k-1]$ code if and only if $\Pi_{ii} \ne 0$. In particular, for every
coordinate at least one of $C^{i}$ and $C_{i}$ is LCD.
\end{lemma}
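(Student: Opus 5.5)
The plan is to compute the hull of the punctured code directly in terms of the projector, and then to obtain the statement for shortening from the duality $(C_i)^\perp = (C^\perp)^i$. The basic tool is the decomposition $e_i = u + v$ with $u = e_i\Pi \in C$ and $v = e_i(I-\Pi) \in C^\perp$, valid since $C$ is LCD and $\F_q^n = C \oplus C^\perp$. Here $\Pi_{ii} = u_i$ is the $i$-th coordinate of the component of $e_i$ in $C$. I will use that $\Pi$ fixes $C$ pointwise and annihilates $C^\perp$, and that the projector of $C^\perp$ is $I-\Pi$; the latter holds because $C^\perp$ is also LCD, with $\Hull(C^\perp) = \Hull(C)$.

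\emph{Puncturing.} First, $\dim C^i = k$ fails exactly when some nonzero codeword is supported on $\{i\}$, that is, when $e_i \in C$; in that case $u = e_i$ and $\Pi_{ii} = 1$. Now assume $e_i \notin C$. Let $c \in C$ be such that its puncture $c'$ lies in $\Hull(C^i)$. Orthogonality of $c'$ to all punctured codewords reads $\langle c,x\rangle - c_i\overline{x_i} = 0$ for all $x \in C$, which says exactly that $c - c_i e_i \in C^\perp$.

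Applying $\Pi$ gives $c = c\Pi = c_i e_i\Pi = c_i u$. Comparing $i$-th coordinates yields $c_i = c_i u_i$. Hence if $\Pi_{ii} = u_i \ne 1$, then $c_i = 0$, so $c = 0$ and $C^i$ is LCD.

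Conversely, suppose $\Pi_{ii} = 1$ and $e_i \notin C$. Then $u \in C$ satisfies $u - u_i e_i = u - e_i = -v \in C^\perp$, so the puncture of $u$ lies in $\Hull(C^i)$. This puncture is nonzero, since $u \ne e_i$ and $u_i = 1$ force some other coordinate of $u$ to be nonzero. So $C^i$ is not LCD. This proves the first equivalence, with the dimension condition included.

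\emph{Shortening.} Apply the puncturing statement to the LCD $[n,n-k]$ code $C^\perp$, whose projector has $(i,i)$ entry $1-\Pi_{ii}$. It says that $(C^\perp)^i$ is an LCD $[n-1,n-k]$ code if and only if $\Pi_{ii} \ne 0$. By $(C_i)^\perp = (C^\perp)^i$, and since a code is LCD iff its dual is, this is equivalent to $C_i$ being an LCD $[n-1,k-1]$ code. The dimensions match because $\dim C_i = (n-1) - \dim (C^\perp)^i$. The final assertion is immediate, since $\Pi_{ii}$ cannot equal both $0$ and $1$.

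The only delicate point is the degenerate case $e_i \in C$, where puncturing drops the dimension; the Hermitian case also needs care with conjugates. Neither changes the argument: the conjugates enter only through the identity $\langle c,x\rangle = \langle c',x'\rangle + c_i\overline{x_i}$, where $'$ denotes puncturing at $i$, and the degenerate case is absorbed by the observation that $e_i \in C$ forces $\Pi_{ii} = 1$.
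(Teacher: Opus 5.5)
Your proof is correct, but it reaches the puncturing criterion by a different route from the paper's. The paper writes $G = [G' \mid y]$ and applies the matrix determinant lemma to the rank-one update $G'\overline{G'}^T = G\overline{G}^T - y\overline{y}^T$, which gives $\det(G'\overline{G'}^T) = \det(G\overline{G}^T)(1-\Pi_{ii})$ in one line. A singular Gram matrix of $G'$ covers both a drop in rank and a nontrivial hull, so the degenerate case $e_i \in C$ needs no separate treatment there.

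You instead work coordinate-free with the splitting $e_i = e_i\Pi + e_i(I-\Pi)$. You show that every element of $\Hull(C^i)$ is the puncture of $c_i\,e_i\Pi$, and that this puncture lies in the hull when $\Pi_{ii}=1$. So your argument actually shows that $\Hull(C^i)$ is spanned by the punctured $i$-th row of $\Pi$ when $\Pi_{ii} = 1$ and is zero otherwise. This costs you the separate case $e_i \in C$. In return it avoids determinants and gives an explicit generator of the hull of the punctured code (in particular $\dim \Hull(C^i) \le 1$), rather than only a test for its triviality.

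Your shortening half is the same as the paper's: apply the puncturing statement to $C^\perp$ with projector $I-\Pi$, then use $(C_i)^\perp = (C^\perp)^i$.
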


\begin{proof}
Write $G = [G' \mid y]$ with $y$ the $i$-th column (after moving it to the end) and
$M = G\overline{G}^T$. Then $G'\overline{G'}^T = M - y\overline{y}^T$, and the matrix determinant
lemma $\det(M - y\overline{y}^T) = \det(M)\,(1 - \overline{y}^T M^{-1} y)$ gives
$\det(G'\overline{G'}^T) = \det(M)(1 - \Pi_{ii})$. Hence $C^{i}$ is an LCD $[n-1,k]$ code if and only
if $\Pi_{ii} \ne 1$. The same argument applied to a generator matrix $H = [H' \mid z]$ of $C^\perp$,
whose projector is $I - \Pi$, gives $\det(H'\overline{H'}^T) = \det(H\overline{H}^T)\,\Pi_{ii}$, so
$(C^\perp)^{i}$ is an LCD $[n-1,n-k]$ code if and only if $\Pi_{ii} \ne 0$. Since
$(C_{i})^\perp = (C^\perp)^{i}$ and a code is LCD if and only if its dual is, $C_{i}$ is an LCD
$[n-1,k-1]$ code if and only if $\Pi_{ii} \ne 0$.
\end{proof}

For binary LCD codes with $d, d^\perp \ge 2$, Bouyuklieva~\cite[Lemma~4]{bouyuklieva2021} showed
that exactly one of $C^{i}$ and $C_{i}$ is LCD, and the same was shown for quaternary Hermitian
LCD codes with $d, d^{\perp} \ge 2$ in~\cite[Theorems~4.2 and~4.5]{ishizuka2023}, where $\Pi_{ii}$
appears as the parity of the weight of the $i$-th column of a generator matrix $G$ with
$G\overline{G}^T = I$; Araya and Harada~\cite{arayaharada2027} used the latter to classify
quaternary Hermitian LCD codes with $d, d^{\perp} \ge 2$. Lemma~\ref{lem:puncture} removes the
conditions on $d$ and $d^\perp$ and covers ternary codes. For $q \in \{2,4\}$ one has
$\Pi_{ii} \in \{0,1\}$, so exactly one of the two operations preserves the LCD property; for
$q = 3$ both do when $\Pi_{ii} = 2$, and in general both do whenever $\Pi_{ii} \notin \{0,1\}$.

\subsection{Canonical augmentation}\label{sec:ca}

To produce every equivalence class of LCD codes exactly once, we apply canonical
augmentation, the method of isomorph-free generation of combinatorial objects introduced by
McKay~\cite{mckay1998} (see also~\cite[Chapter~4]{kaskiostergard2006}), to the generation of LCD codes given by
Lemma~\ref{lem:puncture}. For $g \in \mathcal{G}_n$ sending coordinate $i$ to
coordinate $j$, let $g_i \in \mathcal{G}_{n-1}$ denote the map obtained from $g$ by deleting
coordinate $i$ in the domain and coordinate $j$ in the range.

\begin{definition}\label{def:deletion}
Let $X$ be an LCD code of length $n$ and let $i$ be a coordinate. The \emph{deletion} $\delta(X,i)$
of $i$ from $X$ is the punctured code $X^{i}$ if $(\Pi_X)_{ii} \ne 1$ and the shortened code $X_{i}$
if $(\Pi_X)_{ii} = 1$. For an LCD code $P$ of length $n-1$, the \emph{extensions} of $P$ are the
codes of length $n$ generated by $[\,G_P \mid y\,]$, $y \in \F_q^{\dim P}$, where $G_P$ is a
generator matrix of $P$, and the codes spanned by $P \oplus \{0\}$ and a vector $(u,1)$,
$u \in \F_q^{n-1}$.
\end{definition}

By Lemma~\ref{lem:puncture}, the deletion maps LCD codes of length $n$ to LCD codes of length
$n-1$, and the extensions go in the opposite direction. Moreover, every LCD code $X$ of length $n$ is
an extension of $\delta(X,n)$, since $X$ is an extension of $X^n$ if $\dim X^n = \dim X$ and of $X_n$
if $\dim X_n = \dim X - 1$, and by Lemma~\ref{lem:puncture} the deletion has this dimension.

\begin{lemma}\label{lem:compat}
Let $X$ be an LCD code of length $n$, let $g \in \mathcal{G}_n$ send coordinate $i$ to coordinate $j$,
and let $Y = Xg$. Then $(\Pi_X)_{ii} = (\Pi_Y)_{jj}$ and $\delta(Y,j) = \delta(X,i)\,g_i$.
\end{lemma}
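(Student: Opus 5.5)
The plan is to first show how the projector transforms under a monomial map, then read off the diagonal entry, and finally check that puncturing and shortening commute with $g$ once $g$ is restricted to $g_i$.

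Write $g$ as the matrix $Q = DP$, where $D = \mathrm{diag}(a_1,\dots,a_n)$ and $P$ is the permutation matrix of $\sigma$ with $\sigma(i)=j$. Then $Y = XQ$, and $GQ$ is a generator matrix of $Y$ whenever $G$ is one of $X$. For $q\in\{2,3,4\}$, and for permutations in general, we have $Q\overline{Q}^T = I$, because $a\overline a = 1$ for every $a \in \F_q^*$. Therefore $GQ\,\overline{GQ}^T = G\overline G^T$, and
\[
\Pi_Y = \overline{Q}^T\,\overline{G}^T(G\overline G^T)^{-1}G\,Q = \overline{Q}^T\Pi_X Q .
\]
Row $j$ of $\overline Q^T$ has a single nonzero entry, $\overline{a_i}$, in column $i$. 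Column $j$ of $Q$ has its single nonzero entry, $a_i$, in row $i$. Hence $(\Pi_Y)_{jj} = \overline{a_i}\,a_i\,(\Pi_X)_{ii} = (\Pi_X)_{ii}$, which is the first claim.

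For the second claim, the previous step shows that $X$ and $Y$ fall into the same case of Definition~\ref{def:deletion}. So it suffices to prove $Y^{j} = X^{i}g_i$ and $Y_{j} = X_{i}g_i$.

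For puncturing, take a codeword $cg$ of $Y$ with $c\in X$. Deleting coordinate $j$ of $cg$ removes exactly the entry $a_ic_i$. The remaining coordinates are the entries $a_kc_k$ for $k\ne i$, placed at positions $\sigma(k)\ne j$, and this is precisely $(c \text{ with coordinate } i \text{ deleted})\,g_i$.

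For shortening, note that $(cg)_j = a_ic_i$ vanishes if and only if $c_i=0$. So $g$ maps the subcode $\{c\in X: c_i=0\}$ onto $\{y\in Y: y_j=0\}$, and deleting the coordinate as in the puncturing case gives $Y_j = X_ig_i$.

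The only step needing care is that $g_i$ is well defined and lies in $\mathcal G_{n-1}$. It is the monomial map with scalars $(a_k)_{k\ne i}$ and the bijection $\sigma$ restricted to $\{1,\dots,n\}\setminus\{i\}\to\{1,\dots,n\}\setminus\{j\}$, with both index sets identified order-preservingly with $\{1,\dots,n-1\}$. When $\mathcal G_n$ consists of permutations only, all $a_k=1$, so $g_i$ is again a permutation and the argument is unchanged.
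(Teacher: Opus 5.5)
Your proof is correct and follows the same approach as the paper's: you write $g$ as a monomial matrix, use $a\overline{a}=1$ to get $\Pi_Y=\overline{Q}^T\Pi_X Q$, and read off $(\Pi_Y)_{jj}=\overline{a_i}\,a_i\,(\Pi_X)_{ii}=(\Pi_X)_{ii}$. You then check that puncturing and shortening commute with $g$ through $g_i$; the paper only asserts this commutation, and you spell it out.
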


\begin{proof}
Write $g$ as a matrix $M = DP$ with $D = \operatorname{diag}(a_1,\dots,a_n)$ and $P$ a permutation
matrix ($D = I$ if $\mathcal{G}_n$ consists of permutations). Since $a_i\overline{a_i} = 1$, we have $\overline{M}^T M = I$, and the
definition of the projector gives $\Pi_Y = \overline{M}^T \Pi_X M = M^{-1}\Pi_X M$. Hence
$(\Pi_Y)_{jj} = \overline{a_i}\,(\Pi_X)_{ii}\,a_i = (\Pi_X)_{ii}$, so the deletions of $i$ from $X$
and of $j$ from $Y$ are of the same type (puncturing or shortening). Puncturing and shortening commute with monomial maps:
$(Xg)^{j} = X^{i} g_i$ and $(Xg)_{j} = X_{i}\, g_i$. This gives the claim.
\end{proof}

\begin{definition}\label{def:canonical}
For an LCD code $X$ let $c^*(X)$ be the coordinate whose vertex (for $q \in \{3,4\}$, one of whose
closure vertices) comes first in the canonical labelling of the projector digraph of
Section~\ref{sec:herm}. The \emph{canonical parent} of $X$ is $p(X) = \delta(X, c^*(X))$.
\end{definition}

The coordinate $c^*(X)$ depends on the code $X$ and not only on its class, but it is
well defined up to automorphisms in the following sense. We write $ig$ for the image of a
coordinate $i$ under $g \in \mathcal{G}_n$.

\begin{lemma}\label{lem:cstar}
Let $X$ be an LCD code and $g \in \mathcal{G}_n$. Then $c^*(X)\,g$ lies in the $\Aut(Xg)$-orbit of
$c^*(Xg)$.
\end{lemma}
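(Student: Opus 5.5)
The plan is to use the defining property of a canonical labelling. Write $\Gamma(X)$ for the coloured projector digraph of $X$; for $q \in \{3,4\}$ this is the digraph of $\widetilde{X}$ together with the extra block edges or $3$-cycles. Let $\mathrm{can}$ denote the nauty canonical form, so that $\Gamma^{\lambda(\Gamma)} = \mathrm{can}(\Gamma)$ depends only on the isomorphism class of $\Gamma$. The proof then has three steps.

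\textbf{Step 1: $g$ induces an isomorphism $\hat g$ of digraphs.} Let $M = DP$ be the matrix of $g$. By the computation in the proof of Lemma~\ref{lem:compat}, $\Pi_{Xg} = M^{-1}\Pi_X M$.
\begin{itemize}
\item For $q=2$, or when $\mathcal{G}_n$ consists of permutations, $\hat g$ is simply the permutation of coordinates.
\item For $q \in \{3,4\}$, take $\hat g$ to be the induced permutation $(i, a_i b) \mapsto (\sigma(i), b)$ of the closure coordinates from Section~\ref{sec:herm}. It maps $\widetilde{X}$ to $\widetilde{Xg}$. By the Bardet--Otmani--Saeed-Taha relation $\Pi_{CP} = P^T \Pi_C P$ applied to the closures, $\hat g$ carries the projector digraph of $\widetilde X$ onto that of $\widetilde{Xg}$.
\item Since $\hat g$ maps blocks to blocks and acts on each block by scalar multiplication, it also preserves the additional block edges or $3$-cycles.
\end{itemize}
Hence $\hat g \colon \Gamma(X) \to \Gamma(Xg)$ is an isomorphism of coloured digraphs. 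It sends vertices of the block of $i$ to vertices of the block of $ig$.

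\textbf{Step 2: compare the two canonical labellings.} Since $\Gamma(X)^{\hat g} = \Gamma(Xg)$, canonicity gives
\[
\Gamma(X)^{\lambda_X} = \Gamma(Xg)^{\lambda_{Xg}} = \Gamma(X)^{\hat g \lambda_{Xg}} .
\]
Therefore $\alpha = \hat g\,\lambda_{Xg}\,\lambda_X^{-1}$, read with the composition convention of right actions, is an automorphism of $\Gamma(X)$. Equivalently, $\beta = \lambda_X^{-1}\hat g\,\lambda_{Xg}$, conjugated suitably, is an automorphism of $\Gamma(Xg)$. Concretely, put $v = \lambda_X^{-1}(1)$ and $w = \lambda_{Xg}^{-1}(1)$; these are the vertices labelled first. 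Then $v\hat g$ and $w$ both lie in $\Gamma(Xg)$ and both receive label $1$ under the labellings $\hat g^{-1}\lambda_X$ and $\lambda_{Xg}$. These two labellings both produce the same canonical graph, so they differ by an automorphism $\phi$ of $\Gamma(Xg)$ with $v\hat g\,\phi = w$.

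\textbf{Step 3: translate back to codes.} Here we invoke the identification, stated after Lemma~\ref{lem:closure}, between $\Aut(\Gamma(Xg))$ and $\Aut(Xg)$. For $q\in\{3,4\}$ this holds precisely because the block structure is encoded in the extra colour. So $\phi$ comes from some $h \in \Aut(Xg)$ whose action on coordinates is the action of $\phi$ on blocks. The block of $v$ is $c^*(X)$, so the block of $v\hat g$ is $c^*(X)g$, and the block of $w$ is $c^*(Xg)$. Hence $c^*(X)\,g\,h = c^*(Xg)$, which proves the claim.

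The main obstacle is Step 1 for $q\in\{3,4\}$, together with the bookkeeping that an automorphism of $\Gamma(Xg)$ is really a code automorphism. This rests on two facts:
\begin{itemize}
\item the block-preserving gadgets force every digraph automorphism to be block-monomial, using Sendrier--Simos and the converse direction in Section~\ref{sec:herm};
\item the choice of ``one of whose closure vertices'' in Definition~\ref{def:canonical} is consistent, which holds because we only track the block containing the first vertex.
\end{itemize}
The order of composition conventions must also be fixed carefully, but that is routine.
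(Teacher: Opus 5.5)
Your proposal is correct and is essentially the paper's proof. Your $\hat g$ is the paper's induced isomorphism $\varphi$, your $\phi=\hat g^{-1}\lambda_X\lambda_{Xg}^{-1}$ (right actions) is the paper's automorphism $\alpha=\psi\circ\varphi^{-1}$ of the digraph of $Xg$ sending the block of $c^*(X)g$ to that of $c^*(Xg)$, and lifting $\phi$ to $h\in\Aut(Xg)$ is the same final step. One minor quibble: Sendrier--Simos is not what forces automorphisms to be block-monomial. The extra-colour edges ($q=3$) or directed $3$-cycles ($q=4$) do that directly, and the converse direction in Section~\ref{sec:herm} then supplies the monomial map.
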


\begin{proof}
Let $Y = Xg$, and let $\Gamma_X$ and $\Gamma_Y$ be the projector digraphs. By Section~\ref{sec:herm},
$g$ induces an isomorphism $\varphi\colon \Gamma_X \to \Gamma_Y$ mapping the vertices of a coordinate
$i$ to those of $ig$, and every automorphism of $\Gamma_Y$ is induced in this way by an element of
$\Aut(Y)$. A canonical labelling $\lambda_\Gamma$ numbers the vertices of a graph $\Gamma$ so that
isomorphic graphs become equal after relabelling~\cite{mckay1998,mckaypiperno2014}. Hence the map
$\psi = \lambda_{\Gamma_Y}^{-1} \circ \lambda_{\Gamma_X}$, which sends each vertex of $\Gamma_X$ to
the vertex of $\Gamma_Y$ with the same label, is an isomorphism, and $\alpha = \psi \circ
\varphi^{-1}$ is an automorphism of $\Gamma_Y$. The map $\psi$ sends the first vertex of $\Gamma_X$,
which belongs to $c^*(X)$, to the first vertex of $\Gamma_Y$, which belongs to $c^*(Y)$. Since
$\varphi$ maps the vertices of $c^*(X)$ to those of $c^*(X)g$ and $\alpha$ is induced by some
$h \in \Aut(Y)$, we obtain $c^*(Y) = c^*(X)\,g\,h$.
\end{proof}

\begin{lemma}\label{lem:parent}
If $X$ and $Y$ are equivalent LCD codes, then $p(X)$ and $p(Y)$ are equivalent.
\end{lemma}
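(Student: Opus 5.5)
The plan is to combine Lemma~\ref{lem:cstar} with Lemma~\ref{lem:compat} twice: first to move the canonical coordinate of $X$ onto a coordinate of $Y$ that lies in the same $\Aut(Y)$-orbit as $c^*(Y)$, and then to absorb the remaining automorphism.

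Let $Y = Xg$ with $g \in \mathcal{G}_n$, and write $i = c^*(X)$ and $j = c^*(Y)$.

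\textbf{Step 1.} By Lemma~\ref{lem:cstar} there is $h \in \Aut(Y)$ with $j = i\,g\,h$. Put $g' = gh$. Then $Xg' = Yh = Y$, because $h$ stabilises $Y$, and $g'$ sends coordinate $i$ to coordinate $j$. So replacing $g$ by $g'$ gives an equivalence from $X$ to $Y$ that maps the canonical coordinate of $X$ exactly onto the canonical coordinate of $Y$.

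\textbf{Step 2.} Apply Lemma~\ref{lem:compat} to $X$, to $g'$ and to the coordinate $i$. It gives
\[
(\Pi_X)_{ii} = (\Pi_Y)_{jj} \quad\text{and}\quad \delta(Y,j) = \delta(X,i)\,g'_i .
\]
Hence
\[
p(Y) = \delta(Y, c^*(Y)) = \delta(X, c^*(X))\,g'_i = p(X)\,g'_i .
\]
Since $g'_i \in \mathcal{G}_{n-1}$, the codes $p(X)$ and $p(Y)$ are equivalent.

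\textbf{Checks.} Two small points need care. First, $g'_i$ must really lie in $\mathcal{G}_{n-1}$. Deleting coordinate $i$ in the domain and coordinate $j$ in the range of a monomial (respectively permutation) map leaves a monomial (respectively permutation) map on the remaining $n-1$ coordinates, so this holds. Second, the composition convention in Lemma~\ref{lem:cstar} must match the right action used here. The identity $c^*(Y) = c^*(X)\,g\,h$ is exactly the statement that $gh$ maps $c^*(X)$ to $c^*(Y)$, so the conventions agree.

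The main step is Step 1, which lets us assume the equivalence carries the canonical coordinate to the canonical coordinate. Once that is done, Step 2 is a direct application of Lemma~\ref{lem:compat}, and the fact that deletions of corresponding coordinates have the same type (puncturing or shortening) is already part of that lemma.
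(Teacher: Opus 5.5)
Your proof is correct and is essentially the paper's own argument. You use Lemma~\ref{lem:cstar} to replace $g$ by $g' = gh$ with $h \in \Aut(Y)$, so that $g'$ sends $c^*(X)$ to $c^*(Y)$, and then Lemma~\ref{lem:compat} gives $p(Y) = p(X)\,g'_{c^*(X)}$.
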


\begin{proof}
Let $g$ be an equivalence from $X$ to $Y$. By Lemma~\ref{lem:cstar} there is $h \in \Aut(Y)$
with $c^*(X)\,gh = c^*(Y)$, so $g' = gh$ is an equivalence from $X$ to $Y$ sending $c^*(X)$ to $c^*(Y)$. By
Lemma~\ref{lem:compat}, $p(Y) = \delta(Y, c^*(Y)) = \delta(X, c^*(X))\,g'_{c^*(X)} = p(X)\,g'_{c^*(X)}$.
\end{proof}

Let $\mathcal{R}_{n-1}$ be a set of representatives of the equivalence classes of LCD codes of
length $n-1$ and dimensions $k-1$ and $k$. For $P \in \mathcal{R}_{n-1}$ let $\mathcal{A}(P)$ be the
set of LCD $[n,k]$ codes $X$ that are extensions of $P$ and satisfy
\begin{enumerate}
\item[(i)] $\delta(X,n) = P$, and
\item[(ii)] the coordinate $n$ lies in the $\Aut(X)$-orbit of $c^*(X)$, that is, $n$ is the
canonical coordinate of $X$ up to automorphisms.
\end{enumerate}

\begin{theorem}\label{thm:ca}
Choose, for every $P \in \mathcal{R}_{n-1}$, one code from each equivalence class of codes in
$\mathcal{A}(P)$. The union of the chosen codes over all $P \in \mathcal{R}_{n-1}$ contains exactly one
representative of every equivalence class of LCD $[n,k]$ codes.
\end{theorem}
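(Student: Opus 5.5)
The proof has the standard two parts of canonical augmentation: existence (every class is hit) and uniqueness (no class is hit twice, whether from the same parent or from different parents). Throughout, Lemmas~\ref{lem:compat}, \ref{lem:cstar} and~\ref{lem:parent} do the work.

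\emph{Existence.} Let $X$ be an LCD $[n,k]$ code, and let $\sigma$ be a permutation sending $c^*(X)$ to $n$. Put $X_1 = X\sigma$. By Lemma~\ref{lem:cstar}, $c^*(X)\sigma = n$ lies in the $\Aut(X_1)$-orbit of $c^*(X_1)$, so $X_1$ satisfies (ii). Lemma~\ref{lem:compat} gives $\delta(X_1,n) = p(X)\sigma_{c^*(X)}$, an LCD code of length $n-1$ and dimension $k$ or $k-1$ by Lemma~\ref{lem:puncture}. It is therefore equivalent to a unique $P \in \mathcal{R}_{n-1}$, say $\delta(X_1,n) g' = P$ with $g' \in \mathcal{G}_{n-1}$.

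Extend $g'$ to $g \in \mathcal{G}_n$ fixing coordinate $n$ with scalar $1$, so that $g_n = g'$. Set $X_2 = X_1 g$. Lemma~\ref{lem:compat} gives $\delta(X_2,n) = P$, which is (i). Condition (ii) is preserved, because $g$ fixes $n$ and conjugates $\Aut(X_1)$ to $\Aut(X_2)$. Again by Lemma~\ref{lem:cstar}, $c^*(X_1)g$ lies in the $\Aut(X_2)$-orbit of $c^*(X_2)$, and $n = ng$. By the remark after Definition~\ref{def:deletion}, $X_2$ is an extension of $\delta(X_2,n)=P$. Hence $X_2 \in \mathcal{A}(P)$, and the code chosen from its class in $\mathcal{A}(P)$ is equivalent to $X$.

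\emph{Uniqueness.} Suppose $X \in \mathcal{A}(P)$ and $Y \in \mathcal{A}(Q)$ are equivalent. First I show that condition (ii) pins the parent class. By (ii) there is $h \in \Aut(X)$ with $c^*(X)h = n$. Applying Lemma~\ref{lem:compat} to $h$ (which maps $X$ to itself) gives $p(X) = \delta(X,c^*(X))$ equivalent to $\delta(X,n) = P$, and likewise $p(Y)$ is equivalent to $Q$. Lemma~\ref{lem:parent} then makes $P$ and $Q$ equivalent, and since both lie in $\mathcal{R}_{n-1}$, $P = Q$.

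It remains to see that the chosen codes from a single $\mathcal{A}(P)$ are pairwise inequivalent. This holds by construction, since one code is chosen per equivalence class within $\mathcal{A}(P)$.

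The main obstacle is making the orbit bookkeeping rigorous. Two points need care. First, $g_i$ must be well defined as an element of $\mathcal{G}_{n-1}$, including the scalar part for monomial maps; one has to check that deleting coordinate $i$ of $g$ commutes with puncturing and shortening as used in Lemma~\ref{lem:compat}. Second, condition (ii) must transport correctly under equivalences fixing $n$. Note that the theorem only requires a choice per equivalence class in $\mathcal{A}(P)$, not per $\Aut(P)$-orbit. So the usual subtle step of canonical augmentation, showing that equivalent extensions of $P$ are related by an automorphism of $P$, is not needed here.
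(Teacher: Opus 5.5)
Your proof is correct and follows the paper's argument essentially step for step. For existence, you move $c^*(X)$ to position $n$, normalise the deletion to $P$ by an equivalence fixing $n$, and transport (ii) by conjugating $\Aut(X_1)$. For uniqueness, you use (ii) with Lemma~\ref{lem:compat} to show $P$ is equivalent to $p(X)$, apply Lemma~\ref{lem:parent} to force the two parents to coincide, and handle a single $\mathcal{A}(P)$ by the one-code-per-class choice. Your explicit appeal to the remark after Definition~\ref{def:deletion}, to see that $X_2$ is an extension of $P$, fills in a step the paper leaves implicit.
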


\begin{proof}
\emph{Every class occurs.} Let $X$ be an LCD $[n,k]$ code, let $\pi$ be a permutation sending
$c^*(X)$ to $n$, and let $X' = X\pi$. By Lemma~\ref{lem:cstar}, $n = c^*(X)\pi$ lies in the
$\Aut(X')$-orbit of $c^*(X')$, and by Lemma~\ref{lem:compat},
$\delta(X',n) = \delta(X, c^*(X))\,\pi_{c^*(X)} = p(X)\,\pi_{c^*(X)}$, which is equivalent to a unique
$P \in \mathcal{R}_{n-1}$. Let $f$ be an equivalence from $\delta(X',n)$ to $P$, let
$\hat f \in \mathcal{G}_n$ act as $f$ on the first $n-1$ coordinates and fix the coordinate $n$, and let
$X'' = X'\hat f$. By Lemma~\ref{lem:compat}, $\delta(X'',n) = \delta(X',n)\,f = P$, so $X''$ is an
extension of $P$ satisfying~(i). If $n = c^*(X')\,a$ with $a \in \Aut(X')$, then
$n = n\hat f = c^*(X')\,\hat f\,(\hat f^{-1} a \hat f)$ with $\hat f^{-1} a \hat f \in \Aut(X'')$, and by
Lemma~\ref{lem:cstar} $c^*(X')\hat f$ lies in the $\Aut(X'')$-orbit of $c^*(X'')$; hence $X''$
satisfies~(ii). Thus $X'' \in \mathcal{A}(P)$, and the class of $X$ meets $\mathcal{A}(P)$.

\emph{No class occurs twice.} Let $X \in \mathcal{A}(P)$ and $Y \in \mathcal{A}(P')$ be equivalent.
By (ii) there is $h \in \Aut(X)$ with $c^*(X)\,h = n$, and since $Xh = X$, Lemma~\ref{lem:compat}
and (i) give $P = \delta(X,n) = \delta(X, c^*(X))\,h_{c^*(X)} = p(X)\,h_{c^*(X)}$, so $P$ is equivalent
to $p(X)$; likewise $P'$ is equivalent to $p(Y)$. By Lemma~\ref{lem:parent}, $p(X)$ and $p(Y)$ are
equivalent, hence so are $P$ and $P'$, and $P = P'$ by the definition of $\mathcal{R}_{n-1}$. Thus
$X$ and $Y$ are equivalent codes in $\mathcal{A}(P)$, and since one code is chosen from each class of
$\mathcal{A}(P)$, at most one of them is chosen.
\end{proof}

The resulting procedure is Algorithm~\ref{alg:ca}, where the canonical form is computed by
Algorithm~\ref{alg:canon}. Starting from the code $\F_q$ of length~$1$, it gives the LCD $[n,k]$ codes
for all $n$ and $1 \le k \le n-1$ inductively, where the zero code serves as the parent of
dimension $k-1$ for $k = 1$, and $\F_q^n$ is the only LCD $[n,n]$ code. Since a code is LCD if and
only if its dual is, it suffices to take $k \le n/2$, the parents of dimension $k > (n-1)/2$ being
obtained as duals.

\begin{algorithm}[htbp]
\caption{Classification of LCD $[n,k]$ codes from LCD codes of length $n-1$}\label{alg:ca}
\begin{algorithmic}[1]
\Require a set $\mathcal{R}_{n-1}$ of representatives of the LCD codes of length $n-1$ and
dimensions $k-1$ and $k$
\Ensure a set of representatives of the LCD $[n,k]$ codes
\For{each $P \in \mathcal{R}_{n-1}$}
  \State $S \gets \emptyset$ \Comment{canonical forms accepted from $P$}
  \For{each extension $X$ of $P$ of dimension $k$}
    \State \textbf{if} $X$ is not LCD \textbf{then continue}
    \State \textbf{if} $\delta(X,n) \ne P$ \textbf{then continue} \Comment{(i)}
    \State compute $F(X)$, $\Aut(X)$ (Algorithm~\ref{alg:canon}) and $c^*(X)$ (Definition~\ref{def:canonical})
    \State \textbf{if} $n \notin c^*(X)\Aut(X)$ \textbf{then continue} \Comment{(ii)}
    \If{$F(X) \notin S$}
      \State $S \gets S \cup \{F(X)\}$; output $X$
    \EndIf
  \EndFor
\EndFor
\end{algorithmic}
\end{algorithm}

\begin{remark}
Condition~(i) seems redundant, since $X$ is an extension of $P$. However, for $q = 3$, an extension
by a vector $(u,1)$ may have $(\Pi_X)_{nn} = 2$. Then both the punctured and the shortened code of $X$
are LCD, $\delta(X,n) = X^n \ne P$, and condition~(i) assures that $X$ is accepted only from the
class of $X^n$.
\end{remark}

\subsection{Verification}\label{sec:verify}

The results were verified in three ways.
\begin{enumerate}
\item \emph{Mass formula.} For every $(n,k)$ reported in this paper, equality
in~\eqref{eq:masscheck} was verified in exact integer arithmetic. A missing class, or two
inequivalent codes identified by mistake, would make the two sides differ. This certificate is
independent of the procedure used to generate the codes~\cite{kaskiostergard2006}.
\item \emph{Known classifications.} The same program reproduces exactly the numbers of
inequivalent codes of all previously published complete classifications: binary
$n \le 13$~\cite{arayaharada2019} and $n \le 15$~\cite{haradatanaka2026}, ternary
$n \le 10$~\cite{arayaharada2019}, and quaternary Hermitian $n \le 12$ and $(n,k) = (13,k)$,
$k \le 4$~\cite{arayaharada2027}.
\item \emph{Comparison with~\cite{bbo2026}.} Bouyuklieva, Bouyukliev and
\"Ozbudak~\cite[Tables~2 and~3]{bbo2026} give the numbers $L'_{n,k}$ of inequivalent LCD $[n,k]$
codes with $d \ge 2$ and $d^\perp \ge 2$ for the ranges of $(n,k)$ listed in
Table~\ref{tab:known}. An LCD code with $d^\perp = 1$ is $C' \oplus \{0\}$ for an
LCD $[n-1,k]$ code $C'$, and an LCD code with $d = 1$ is $C'' \oplus \F_q$ for an LCD $[n-1,k-1]$
code $C''$; hence
\begin{equation}\label{eq:inclexcl}
\Nn{q}{n}{k} = L'_{n,k} + \Nn{q}{n-1}{k} + \Nn{q}{n-1}{k-1} - \Nn{q}{n-2}{k-1}.
\end{equation}
With the values of~\cite{bbo2026}, \eqref{eq:inclexcl} reproduces every entry of
Table~\ref{tab:counts} for which the right-hand side is available, including
$(n,k) = (19,k)$, $3 \le k \le 5$, for $q = 2$ and $(15,k)$, $k \in \{3,4\}$, for $q = 3$. The numbers $1$ and $10$ of ternary LCD $[15,5,8]$ and $[15,6,7]$ codes with $d^\perp \ge 2$ in~\cite[Table~3]{bbo2026} also agree with Table~\ref{tab:dist-3-15}.
\end{enumerate}

\section{Results}\label{sec:results}

Table~\ref{tab:counts} lists $\Nn{2}{n}{k}$ for $16 \le n \le \NB$, $\Nn{3}{n}{k}$ for
$11 \le n \le \NT$ and $\Nn{4}{n}{k}$ for $n = \NH$, $k \le n/2$; the numbers for smaller lengths
are in~\cite{arayaharada2019,haradatanaka2026,arayaharada2027} and agree with our classification.
The cases $k = 1$ are elementary: $\langle x \rangle$ is LCD if and only if
$\langle x,x \rangle \ne 0$, that is, if and only if $\wt(x)$ is odd for $q \in \{2,4\}$ and
$3 \nmid \wt(x)$ for $q = 3$, so $\Nn{2}{n}{1} = \Nn{4}{n}{1} = \lceil n/2 \rceil$ and
$\Nn{3}{n}{1} = n - \lfloor n/3 \rfloor$.

\begin{table}[htbp]
\centering\footnotesize
\caption{Numbers $\Nn{q}{n}{k}$ of inequivalent LCD $[n,k]$ codes over $\F_q$ ($q = 4$: Hermitian),
$k \le n/2$, for the lengths whose classification is new. A dagger marks a number known before
without representatives (for $q = 2$ and $k \in \{2,3\}$ from~\cite[Table~3]{arayaharada2019}
and $k = 4$ from~\cite[Table~6]{haradatanaka2026}, otherwise obtained from~\cite[Tables~2 and~3]{bbo2026} by~\eqref{eq:inclexcl}), and a double dagger
a classification with representatives in~\cite{arayaharada2027}; the numbers for $k = 1$ are
elementary}\label{tab:counts}
\textit{$q = 2$}\par\smallskip
\adjustbox{max width=\textwidth}{\setlength{\tabcolsep}{3pt}\begin{tabular}{rrrrrrrrrr}\hline
$n \backslash k$ & 1 & 2 & 3 & 4 & 5 & 6 & 7 & 8 & 9\\\hline
16 & 8 & 93$^\dagger$ & 772$^\dagger$ & 9{,}265$^\dagger$ & 88{,}102$^\dagger$ & 660{,}611$^\dagger$ & 2{,}438{,}724$^\dagger$ & 3{,}970{,}828$^\dagger$ & \\
17 & 9 & 101$^\dagger$ & 1{,}120$^\dagger$ & 16{,}092$^\dagger$ & 234{,}959$^\dagger$ & 2{,}749{,}713$^\dagger$ & 17{,}679{,}634$^\dagger$ & 48{,}233{,}390 & \\
18 & 9 & 126$^\dagger$ & 1{,}468$^\dagger$ & 29{,}160$^\dagger$ & 596{,}869$^\dagger$ & 11{,}530{,}387$^\dagger$ & 127{,}861{,}248 & 623{,}430{,}661 & 1{,}071{,}214{,}574\\
19 & 10 & 136$^\dagger$ & 2{,}058$^\dagger$ & 49{,}242$^\dagger$ & 1{,}518{,}816$^\dagger$ & 46{,}930{,}883 & 922{,}288{,}329 & 8{,}145{,}117{,}263 & 25{,}473{,}225{,}213\\
\hline\end{tabular}
}
\par\bigskip
\textit{$q = 3$}\par\smallskip
\adjustbox{max width=\textwidth}{\setlength{\tabcolsep}{3pt}\begin{tabular}{rrrrrrrr}\hline
$n \backslash k$ & 1 & 2 & 3 & 4 & 5 & 6 & 7\\\hline
11 & 8 & 60 & 555$^\dagger$ & 4{,}511$^\dagger$ & 16{,}769$^\dagger$ &  & \\
12 & 8 & 78 & 1{,}041$^\dagger$ & 16{,}004$^\dagger$ & 138{,}865$^\dagger$ & 314{,}870$^\dagger$ & \\
13 & 9 & 99 & 1{,}898$^\dagger$ & 57{,}158$^\dagger$ & 1{,}230{,}837$^\dagger$ & 7{,}244{,}470$^\dagger$ & \\
14 & 10 & 126 & 3{,}410$^\dagger$ & 202{,}462$^\dagger$ & 11{,}009{,}956$^\dagger$ & 177{,}711{,}040 & 472{,}451{,}388\\
15 & 10 & 156 & 5{,}997$^\dagger$ & 701{,}653$^\dagger$ & 96{,}254{,}395 & 4{,}317{,}138{,}744 & 32{,}381{,}292{,}271\\
\hline\end{tabular}
}
\par\bigskip
\textit{$q = 4$}\par\smallskip
\adjustbox{max width=\textwidth}{\setlength{\tabcolsep}{3pt}\begin{tabular}{rrrrrrr}\hline
$n \backslash k$ & 1 & 2 & 3 & 4 & 5 & 6\\\hline
13 & 7 & 122$^\ddagger$ & 9{,}390$^\ddagger$ & 3{,}210{,}375$^\ddagger$ & 388{,}003{,}258 & 5{,}255{,}819{,}891\\
\hline\end{tabular}
}
\end{table}

The classification cannot be extended to the next lengths in practice, for the following reason.
The mass formula gives a lower bound on the number of classes that does not depend on any
computation.\label{sec:beyond}

\begin{proposition}[{cf.~\cite[(4)]{arayaharada2027}}]\label{prop:lower}
For $q \in \{2,3,4\}$ and $0 < k < n$,
\[
\Nn{q}{n}{k} \ \ge\ \frac{L_q[n,k]}{(q-1)^{n-1}\, n!},
\]
with equality if and only if the automorphism group of every LCD $[n,k]$ code over $\F_q$
consists of the scalar matrices only.
\end{proposition}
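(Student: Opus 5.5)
The plan is to apply the mass formula~\eqref{eq:masscheck} to a complete set $\mathcal{B}$ of representatives of the LCD $[n,k]$ codes. Then $|\mathcal{B}| = \Nn{q}{n}{k}$, and since $\mathcal{B}$ meets every class we have equality:
\[
L_q[n,k] = \sum_{C \in \mathcal{B}} \frac{(q-1)^n\, n!}{|\Aut(C)|}.
\]
Each term will be bounded above by a uniform constant, which immediately gives the lower bound on $|\mathcal{B}|$.

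The key observation is that the scalar matrices $aI$, $a \in \F_q^*$, lie in $\Mon(n,q) = \mathcal{G}_n$ for $q \in \{2,3,4\}$. They fix every linear code, since $C(aI) = aC = C$, so they form a subgroup of $\Aut(C)$ of order $q-1$. Hence $|\Aut(C)| \ge q-1$ for every $C$, and each term satisfies
\[
\frac{(q-1)^n n!}{|\Aut(C)|} \le (q-1)^{n-1} n!.
\]
Summing over $\mathcal{B}$ gives $L_q[n,k] \le \Nn{q}{n}{k}\,(q-1)^{n-1} n!$, which is the stated inequality.

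For the equality case, note that the inequality is an equality if and only if every term attains its bound, that is, $|\Aut(C)| = q-1$ for every $C \in \mathcal{B}$. Since $\Aut(C)$ contains the scalar group of order $q-1$, this holds exactly when $\Aut(C)$ equals the scalar group. Finally, automorphism groups of equivalent codes are conjugate, $\Aut(Cg) = g^{-1}\Aut(C)g$, and the scalar group is central in $\Mon(n,q)$. So the condition on the representatives is equivalent to the condition on every LCD $[n,k]$ code. For $q = 2$ the scalar group is trivial, and the statement should be read as $\Aut(C) = \{1\}$.

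The only point needing care is the hypothesis $0 < k < n$. For $k = 0$ or $k = n$ the code is $\{0\}$ or $\F_q^n$, and its automorphism group is all of $\Mon(n,q)$; excluding these cases keeps the statement meaningful. The inequality itself, however, holds for any $k$. Everything else is routine, and I expect no real obstacle beyond checking that the scalars lie in $\mathcal{G}_n$ and act trivially on $C$.
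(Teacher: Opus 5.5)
Your proposal is correct and matches the paper's proof: both use the scalar subgroup to get $|\Aut(C)| \ge q-1$ and substitute this into the mass formula~\eqref{eq:masscheck}, which holds with equality for a complete set of representatives, with equality in the bound exactly when every $|\Aut(C)| = q-1$. Your remark that automorphism groups of equivalent codes are conjugate, and that the scalars are central, fills in a small step the paper leaves implicit: it extends the equality condition from the representatives to all LCD $[n,k]$ codes.
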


\begin{proof}
For every code $C$ the scalar matrices $aI$, $a \in \F_q^*$, lie in $\Aut(C)$, so
$|\Aut(C)| \ge q - 1$. Since $|\Mon(n,q)| = (q-1)^n n!$, equality in~\eqref{eq:masscheck} gives
$L_q[n,k] = \sum_{C} (q-1)^n n!/|\Aut(C)| \le \Nn{q}{n}{k}\,(q-1)^{n-1} n!$, with equality if and
only if $|\Aut(C)| = q-1$ for all $C$.
\end{proof}

Since $L_q[n,k]$ is known in closed form (Section~\ref{sec:mass}), these bounds are exactly
computable. For the lengths following those classified here, binary length~$20$ has at least
$1.2 \times 10^{12}$ classes with $k \le 10$, ternary length~$16$ at least $7.7 \times 10^{12}$
classes with $k \le 8$, and quaternary Hermitian length~$14$ at least $2.3 \times 10^{12}$ classes
with $k \le 7$; by duality only the classes with $k \le n/2$ need to be stored.

A classification consists of one generator matrix per class. In the xz-compressed text format
of~\cite{data}, a class of a binary $[19,8]$, a ternary $[15,6]$ and a quaternary Hermitian
$[13,6]$ code takes about $6$, $4$ and $3$ bytes, respectively, and the largest classifications
of this paper occupy about $191$~GB (binary length~$\NB$), $139$~GB (ternary
length~$\NT$) and $18$~GB (quaternary Hermitian length~$\NH$). Even at $3$ bytes per class, the
smallest rate observed here, the lower bounds above give more than $3.5$~TB, $23$~TB and $7$~TB
for binary length~$20$, ternary length~$16$ and quaternary Hermitian length~$14$. Uncompressed,
a generator matrix in this format takes $(n+1)k$ bytes, and the same bounds give about $240$~TB,
$1$~PB and $240$~TB.
This exceeds the capacity of general-purpose open data repositories (Zenodo, for instance,
allows $50$~GB per record and an additional $150$~GB per user), and even if such a classification were
computed, storing and distributing it would be impractical. This is why the complete
classification ends at $n = \NB$, $n = \NT$ and $n = \NH$.

\section{Minimum weights}\label{sec:dmin}

Let $\Nn{q}{n}{k}(d)$ denote the number of inequivalent LCD $[n,k,d]$ codes over $\F_q$.
Tables~\ref{tab:dist-2-14}--\ref{tab:dist-4-13} in Appendix~\ref{app:dist} list these numbers for $14 \le n \le \NB$
($q = 2$), $10 \le n \le \NT$ ($q = 3$) and $n = \NH$ ($q = 4$); the
minimum weights of all codes and of their duals were computed by exhaustive enumeration of
codewords. Let $\dall(n,k)$ denote the largest minimum weight among all $[n,k]$ codes
over~$\F_q$~\cite{grassl}, and let $\dlcd(n,k)$ denote the largest minimum weight among LCD
$[n,k]$ codes. Table~\ref{tab:dmax} gives $\dlcd(n,k)$ together with the information whether
it equals $\dall(n,k)$.

\begin{proposition}\label{prop:dmax}
Let $2 \le k \le n-2$.
\begin{enumerate}
\item For $q = 2$ and $n \le \NB$, $\dlcd(n,k) = \dall(n,k) - 2$ if $(n,k) \in S_2'$,
$\dlcd(n,k) = \dall(n,k) - 1$ if $(n,k) \in S_2$, and $\dlcd(n,k) = \dall(n,k)$ otherwise, where
\begin{align*}
S_2' &= \{\SetBtwo\},\\
S_2 &= \{\SetBone\}.
\end{align*}
\item For $q = 3$ and $n \le \NT$, $\dlcd(n,k) = \dall(n,k) - 1$ if $(n,k) \in S_3$ and
$\dlcd(n,k) = \dall(n,k)$ otherwise, where
\begin{align*}
S_3 = \{&\SetTone\}.
\end{align*}
\item For $q = 4$ (Hermitian) and $n \le \NH$, $\dlcd(n,k) = \dall(n,k) - 1$ if $(n,k) \in S_4$ and
$\dlcd(n,k) = \dall(n,k)$ otherwise, where
\begin{align*}
S_4 = \{&\SetHone\}.
\end{align*}
\end{enumerate}
\end{proposition}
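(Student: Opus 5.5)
This is a computational statement, and I would prove it by reading it off the complete classification of Theorem~\ref{thm:ca}. Fix $q$ and $(n,k)$ with $2\le k\le n-2$ and $n$ within the stated range. By Theorem~\ref{thm:ca} and the mass-formula certificate of Section~\ref{sec:verify}, which checks equality in~\eqref{eq:masscheck}, the computed set of representatives contains every LCD $[n,k]$ code up to equivalence. The minimum weight is invariant under $\mathcal{G}_n$, so $\dlcd(n,k)$ is the maximum of $d(C)$ over the representatives $C$. I would compute $d(C)$ for each representative by exhaustive enumeration of the $q^k$ codewords, as in the tables of Appendix~\ref{app:dist}. To keep $k$ small, I would use $d(C^\perp)$ of the representatives with $k\le n/2$ for the cases $k>n/2$, since duality is a bijection between LCD $[n,k]$ and LCD $[n,n-k]$ classes. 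The values $\dall(n,k)$ are taken from~\cite{grassl}, where they are exact for these small parameters. The lengths $n$ below those classified here use the earlier classifications, whose counts we reproduced.

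To reduce the work, I would only need the upper end of the weight distribution for each $(n,k)$. The upper bound $\dlcd\le\dall$ is trivial. For pairs outside $S_2'\cup S_2$, $S_3$, $S_4$, it suffices to exhibit a single LCD code attaining $\dall(n,k)$. For pairs in these sets, I would show that no representative attains $\dall(n,k)$, respectively $\dall(n,k)-1$ for $S_2'$, and that some representative attains the claimed value. Both facts are read off the counts $\Nn{q}{n}{k}(d)$ in the appendix tables.

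The main obstacle is the sheer volume: up to about $2.5\times10^{10}$ binary $[19,9]$ classes must have their minimum weights certified. I would first handle it by an early-abort enumeration: stop as soon as a codeword of weight below the current target is found. Most codes have small $d$, so this is cheap. A Brouwer--Zimmermann style computation is then needed only for the few codes near the maximum. A further reduction is available from the inclusion--exclusion identity~\eqref{eq:inclexcl}: codes with $d=1$ or $d^\perp=1$ are direct sums with a trivial code and never attain large $d$ when $k\ge2$, so only codes with $d,d^\perp\ge2$ matter. Finally, I would cross-check against~\cite{bbo2026} where its ranges overlap.
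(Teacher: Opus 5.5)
Your argument is the paper's. The paper's proof is one line: the values are read off the complete classification, with the minimum weights of all representatives and of their duals computed by enumerating codewords, and compared with the exact values of $\dall(n,k)$ in~\cite{grassl}.

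One optional remark in your last paragraph is wrong as justified, although the proof does not depend on it. Codes with $d=1$ are indeed irrelevant, since $\dlcd(n,k)\ge 2$ throughout this range. Codes with $d^\perp=1$ are a different matter. Such a code is $C'\oplus\{0\}$ with $C'$ an LCD $[n-1,k]$ code, and its minimum weight is $d(C')$. This can be as large as $\dlcd(n-1,k)$, which often equals $\dlcd(n,k)$. For example, for $q=2$ one has $\dlcd(17,8)=\dlcd(18,8)=6$. The unique binary LCD $[17,8,6]$ code extended by a zero coordinate is therefore one of the $338$ optimal LCD $[18,8,6]$ codes (Tables~\ref{tab:dist-2-17} and~\ref{tab:dist-2-18}).

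Restricting the new computations to $d,d^\perp\ge 2$ is still legitimate, but for a different reason. You can argue in either of two ways.
\begin{enumerate}
\item Use $\dlcd(n,k)=\max\{\dlcd(n-1,k),\,d'\}$, where $d'$ is the largest minimum weight of an LCD $[n,k]$ code with $d,d^\perp\ge 2$, and $\dlcd(n-1,k)$ is already known from the previous length.
\item For $k\ge 2$, replace a zero column of an LCD code by a nonzero column $y$ with $\overline{y}^{T}M^{-1}y\ne -1$, where $M$ is the Gram matrix of the remaining columns. By the determinant identity in the proof of Lemma~\ref{lem:puncture}, this keeps the code LCD, and it does not decrease $d$. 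Such a $y$ exists because the form $y\mapsto\overline{y}^{T}M^{-1}y$ in $k\ge 2$ variables takes a value other than $-1$ at some nonzero vector. Over $\F_2$ this form is linear, so it has a nonzero zero. Over $\F_3$ a nondegenerate quadratic form in at least two variables represents $1$. Over $\F_4$ a nondegenerate Hermitian form in at least two variables is isotropic.
\end{enumerate}
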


\begin{proof}
All values are read off from the classification, and $\dall(n,k)$ is known exactly in this
range~\cite{grassl}.
\end{proof}

For $q = 2$ the values $\dlcd(n,k)$ agree with~\cite[Table~3]{haradasaito2019} for $n \le 16$
and with~\cite[Table~15]{arayaharada2020} and~\cite[Table~3]{bouyuklieva2021} for $n = 17$, and
the numbers $\Nn{2}{n}{k}(\dlcd(n,k))$ of optimal LCD codes agree
with~\cite[Table~3]{haradasaito2019} for $n \le 16$ and with~\cite[Table~3]{bouyuklieva2021}
for $n = 17$ (codes with $d^\perp \ge 2$), and the values for $18 \le n \le \NB$ agree
with~\cite[Table~15]{arayaharada2020}. For $q = 3$ the set $S_3 \cap \{n \le 10\}$
coincides with the set $S$ in~\cite[Remark~9]{arayaharada2019}, and the values for
$11 \le n \le \NT$ agree with~\cite[Table~7]{ahs2021}. For $q = 4$ the numbers
$\Nn{4}{n}{k}(d)$ for $n \le 12$ agree with~\cite[Tables~2--4]{arayaharada2027}.

\begin{table}[htbp]
\centering\footnotesize\renewcommand{\arraystretch}{0.9}
\caption{Largest minimum weights $\dlcd(n,k)$ of LCD $[n,k]$ codes; an asterisk marks a value
equal to $\dall(n,k)-1$, a double asterisk a value equal to $\dall(n,k)-2$, all other values
equal $\dall(n,k)$}\label{tab:dmax}
\textit{$q = 2$}\par\smallskip
\setlength{\tabcolsep}{3.5pt}\begin{tabular}{rcccccccccccccccccc}\hline
$n \backslash k$ & 1 & 2 & 3 & 4 & 5 & 6 & 7 & 8 & 9 & 10 & 11 & 12 & 13 & 14 & 15 & 16 & 17 & 18\\\hline
3 & 3 & 2 &  &  &  &  &  &  &  &  &  &  &  &  &  &  &  & \\
4 & 3$^{*}$ & 2 & 1$^{*}$ &  &  &  &  &  &  &  &  &  &  &  &  &  &  & \\
5 & 5 & 2$^{*}$ & 2 & 2 &  &  &  &  &  &  &  &  &  &  &  &  &  & \\
6 & 5$^{*}$ & 3$^{*}$ & 2$^{*}$ & 2 & 1$^{*}$ &  &  &  &  &  &  &  &  &  &  &  &  & \\
7 & 7 & 4 & 3$^{*}$ & 2$^{*}$ & 2 & 2 &  &  &  &  &  &  &  &  &  &  &  & \\
8 & 7$^{*}$ & 5 & 3$^{*}$ & 3$^{*}$ & 2 & 2 & 1$^{*}$ &  &  &  &  &  &  &  &  &  &  & \\
9 & 9 & 6 & 4 & 4 & 3 & 2 & 2 & 2 &  &  &  &  &  &  &  &  &  & \\
10 & 9$^{*}$ & 6 & 5 & 4 & 3$^{*}$ & 3 & 2 & 2 & 1$^{*}$ &  &  &  &  &  &  &  &  & \\
11 & 11 & 6$^{*}$ & 5$^{*}$ & 4$^{*}$ & 4 & 4 & 3 & 2 & 2 & 2 &  &  &  &  &  &  &  & \\
12 & 11$^{*}$ & 7$^{*}$ & 6 & 5$^{*}$ & 4 & 4 & 3$^{*}$ & 2$^{*}$ & 2 & 2 & 1$^{*}$ &  &  &  &  &  &  & \\
13 & 13 & 8 & 6$^{*}$ & 6 & 5 & 4 & 4 & 3$^{*}$ & 2$^{*}$ & 2 & 2 & 2 &  &  &  &  &  & \\
14 & 13$^{*}$ & 9 & 7$^{*}$ & 6$^{*}$ & 5$^{*}$ & 5 & 4 & 4 & 3$^{*}$ & 2$^{*}$ & 2 & 2 & 1$^{*}$ &  &  &  &  & \\
15 & 15 & 10 & 7$^{*}$ & 6$^{**}$ & 6$^{*}$ & 6 & 5 & 4 & 4 & 3$^{*}$ & 2$^{*}$ & 2 & 2 & 2 &  &  &  & \\
16 & 15$^{*}$ & 10 & 8 & 7$^{*}$ & 6$^{**}$ & 6 & 5$^{*}$ & 5 & 4 & 4 & 3$^{*}$ & 2 & 2 & 2 & 1$^{*}$ &  &  & \\
17 & 17 & 10$^{*}$ & 9 & 8 & 7$^{*}$ & 6$^{*}$ & 6 & 6 & 5 & 4 & 3$^{*}$ & 3 & 2 & 2 & 2 & 2 &  & \\
18 & 17$^{*}$ & 11$^{*}$ & 9$^{*}$ & 8 & 7$^{*}$ & 7$^{*}$ & 6$^{*}$ & 6 & 5$^{*}$ & 4 & 4 & 4 & 3 & 2 & 2 & 2 & 1$^{*}$ & \\
19 & 19 & 12 & 10 & 9 & 8 & 8 & 7$^{*}$ & 6$^{*}$ & 6 & 5 & 4 & 4 & 3$^{*}$ & 3 & 2 & 2 & 2 & 2\\
\hline\end{tabular}

\par\medskip
\textit{$q = 3$}\par\smallskip
\setlength{\tabcolsep}{3.5pt}\begin{tabular}{rcccccccccccccc}\hline
$n \backslash k$ & 1 & 2 & 3 & 4 & 5 & 6 & 7 & 8 & 9 & 10 & 11 & 12 & 13 & 14\\\hline
3 & 2$^{*}$ & 1$^{*}$ &  &  &  &  &  &  &  &  &  &  &  & \\
4 & 4 & 2$^{*}$ & 2 &  &  &  &  &  &  &  &  &  &  & \\
5 & 5 & 3 & 2 & 2 &  &  &  &  &  &  &  &  &  & \\
6 & 5$^{*}$ & 4 & 3 & 2 & 1$^{*}$ &  &  &  &  &  &  &  &  & \\
7 & 7 & 4$^{*}$ & 4 & 3 & 2 & 2 &  &  &  &  &  &  &  & \\
8 & 8 & 5$^{*}$ & 4$^{*}$ & 4 & 3 & 2 & 2 &  &  &  &  &  &  & \\
9 & 8$^{*}$ & 6 & 5$^{*}$ & 4$^{*}$ & 3$^{*}$ & 3 & 2 & 1$^{*}$ &  &  &  &  &  & \\
10 & 10 & 7 & 6 & 5$^{*}$ & 4$^{*}$ & 3$^{*}$ & 3 & 2 & 2 &  &  &  &  & \\
11 & 11 & 7$^{*}$ & 6$^{*}$ & 6 & 5$^{*}$ & 4$^{*}$ & 3 & 2$^{*}$ & 2 & 2 &  &  &  & \\
12 & 11$^{*}$ & 8$^{*}$ & 7$^{*}$ & 6 & 5$^{*}$ & 5$^{*}$ & 4 & 3 & 2$^{*}$ & 2 & 1$^{*}$ &  &  & \\
13 & 13 & 9 & 8$^{*}$ & 7 & 6 & 6 & 5 & 4 & 3 & 2$^{*}$ & 2 & 2 &  & \\
14 & 14 & 10 & 8$^{*}$ & 8 & 7 & 6 & 6 & 5 & 4 & 3 & 2 & 2 & 2 & \\
15 & 14$^{*}$ & 10$^{*}$ & 9 & 8$^{*}$ & 8 & 7 & 6 & 5 & 4 & 4 & 3 & 2 & 2 & 1$^{*}$\\
\hline\end{tabular}

\par\medskip
\textit{$q = 4$ (Hermitian)}\par\smallskip
\setlength{\tabcolsep}{3.5pt}\begin{tabular}{rcccccccccccc}\hline
$n \backslash k$ & 1 & 2 & 3 & 4 & 5 & 6 & 7 & 8 & 9 & 10 & 11 & 12\\\hline
2 & 1$^{*}$ &  &  &  &  &  &  &  &  &  &  & \\
3 & 3 & 2 &  &  &  &  &  &  &  &  &  & \\
4 & 3$^{*}$ & 2$^{*}$ & 1$^{*}$ &  &  &  &  &  &  &  &  & \\
5 & 5 & 3$^{*}$ & 2$^{*}$ & 2 &  &  &  &  &  &  &  & \\
6 & 5$^{*}$ & 4 & 3$^{*}$ & 2 & 1$^{*}$ &  &  &  &  &  &  & \\
7 & 7 & 5 & 4 & 3 & 2 & 2 &  &  &  &  &  & \\
8 & 7$^{*}$ & 6 & 5 & 4 & 3 & 2 & 1$^{*}$ &  &  &  &  & \\
9 & 9 & 6$^{*}$ & 6 & 5 & 4 & 3 & 2 & 2 &  &  &  & \\
10 & 9$^{*}$ & 7$^{*}$ & 6 & 6 & 5 & 4 & 3 & 2 & 1$^{*}$ &  &  & \\
11 & 11 & 8 & 7 & 6 & 6 & 5 & 4 & 3 & 2 & 2 &  & \\
12 & 11$^{*}$ & 9 & 8 & 7 & 6 & 5$^{*}$ & 4 & 4 & 3 & 2 & 1$^{*}$ & \\
13 & 13 & 10 & 9 & 8 & 7 & 6 & 5 & 4 & 4 & 3 & 2 & 2\\
\hline\end{tabular}

\end{table}

\section{Conclusion}\label{sec:concl}

We classified binary LCD codes of lengths up to $\NB$, ternary LCD codes of lengths up to $\NT$
and quaternary Hermitian LCD codes of lengths up to $\NH$, for all dimensions, by reducing the
equivalence of LCD codes to graph isomorphism on the coordinates and generating only LCD codes.

The limit of complete classification is now set by the amount of data rather than by the
equivalence test: one canonical form takes between $5$ and $44$ microseconds (Table~\ref{tab:cost}), whereas the
next lengths would produce several terabytes of generator matrices, which cannot be stored and
distributed with present means. The method itself would extend to these lengths if storage and
transfer of such data become practical. Beyond the present lengths, classifications of codes of
small dimension, or of codes with additional properties such as optimal minimum weight, remain
feasible, although they are not complete classifications in the sense of this paper.

\clearpage
\appendix
\section{Minimum weight distributions}\label{app:dist}

\begin{table}[htbp]\centering\scriptsize
\caption{Numbers $\Nn{2}{14}{k}(d)$ of inequivalent binary LCD $[14,k,d]$ codes}\label{tab:dist-2-14}
\adjustbox{max width=\textwidth}{\setlength{\tabcolsep}{2.2pt}\begin{tabular}{rrrrrrrrrrrrrr}\hline
$d \backslash k$ & 1 & 2 & 3 & 4 & 5 & 6 & 7 & 8 & 9 & 10 & 11 & 12 & 13\\\hline
1 & 1 & 7 & 50 & 277 & 1{,}363 & 4{,}576 & 9{,}036 & 9{,}036 & 4{,}576 & 1{,}363 & 277 & 50 & 7\\
2 &  & 12 & 95 & 1{,}015 & 5{,}366 & 21{,}894 & 35{,}754 & 27{,}529 & 7{,}496 & 1{,}370 & 103 & 16 & \\
3 & 1 & 10 & 97 & 705 & 3{,}660 & 9{,}470 & 7{,}721 & 1{,}409 & 31 &  &  &  & \\
4 &  & 8 & 60 & 559 & 1{,}613 & 2{,}038 & 301 & 8 &  &  &  &  & \\
5 & 1 & 10 & 60 & 161 & 101 & 4 &  &  &  &  &  &  & \\
6 &  & 10 & 17 & 16 &  &  &  &  &  &  &  &  & \\
7 & 1 & 6 & 1 &  &  &  &  &  &  &  &  &  & \\
8 &  & 2 &  &  &  &  &  &  &  &  &  &  & \\
9 & 1 & 1 &  &  &  &  &  &  &  &  &  &  & \\
11 & 1 &  &  &  &  &  &  &  &  &  &  &  & \\
13 & 1 &  &  &  &  &  &  &  &  &  &  &  & \\
total & 7 & 66 & 380 & 2{,}733 & 12{,}103 & 37{,}982 & 52{,}812 & 37{,}982 & 12{,}103 & 2{,}733 & 380 & 66 & 7\\\hline
\end{tabular}
}
\end{table}
\begin{table}[htbp]\centering\scriptsize
\caption{Numbers $\Nn{2}{15}{k}(d)$ of inequivalent binary LCD $[15,k,d]$ codes}\label{tab:dist-2-15}
\adjustbox{max width=\textwidth}{\setlength{\tabcolsep}{2.2pt}\begin{tabular}{rrrrrrrrrrrrrrr}\hline
$d \backslash k$ & 1 & 2 & 3 & 4 & 5 & 6 & 7 & 8 & 9 & 10 & 11 & 12 & 13 & 14\\\hline
1 & 1 & 7 & 66 & 380 & 2{,}733 & 12{,}103 & 37{,}982 & 52{,}812 & 37{,}982 & 12{,}103 & 2{,}733 & 380 & 66 & 7\\
2 &  & 13 & 121 & 1{,}536 & 12{,}141 & 72{,}871 & 209{,}800 & 255{,}773 & 112{,}535 & 21{,}574 & 2{,}143 & 196 & 7 & 1\\
3 & 1 & 10 & 140 & 1{,}208 & 10{,}657 & 48{,}843 & 94{,}129 & 44{,}319 & 3{,}836 & 34 &  &  &  & \\
4 &  & 9 & 89 & 1{,}175 & 7{,}006 & 20{,}165 & 11{,}977 & 985 & 2 &  &  &  &  & \\
5 & 1 & 10 & 111 & 488 & 1{,}164 & 371 & 1 &  &  &  &  &  &  & \\
6 &  & 12 & 41 & 89 & 10 & 2 &  &  &  &  &  &  &  & \\
7 & 1 & 6 & 8 &  &  &  &  &  &  &  &  &  &  & \\
8 &  & 4 &  &  &  &  &  &  &  &  &  &  &  & \\
9 & 1 & 1 &  &  &  &  &  &  &  &  &  &  &  & \\
10 &  & 1 &  &  &  &  &  &  &  &  &  &  &  & \\
11 & 1 &  &  &  &  &  &  &  &  &  &  &  &  & \\
13 & 1 &  &  &  &  &  &  &  &  &  &  &  &  & \\
15 & 1 &  &  &  &  &  &  &  &  &  &  &  &  & \\
total & 8 & 73 & 576 & 4{,}876 & 33{,}711 & 154{,}355 & 353{,}889 & 353{,}889 & 154{,}355 & 33{,}711 & 4{,}876 & 576 & 73 & 8\\\hline
\end{tabular}
}
\end{table}
\begin{table}[htbp]\centering\scriptsize
\caption{Numbers $\Nn{2}{16}{k}(d)$ of inequivalent binary LCD $[16,k,d]$ codes}\label{tab:dist-2-16}
\adjustbox{max width=\textwidth}{\setlength{\tabcolsep}{2.2pt}\begin{tabular}{rrrrrrrrr}\hline
$d \backslash k$ & 1 & 2 & 3 & 4 & 5 & 6 & 7 & 8\\\hline
1 & 1 & 8 & 73 & 576 & 4{,}876 & 33{,}711 & 154{,}355 & 353{,}889\\
2 &  & 14 & 148 & 2{,}416 & 25{,}140 & 238{,}912 & 1{,}138{,}286 & 2{,}477{,}897\\
3 & 1 & 12 & 168 & 2{,}135 & 26{,}328 & 220{,}839 & 867{,}166 & 1{,}046{,}125\\
4 &  & 10 & 122 & 2{,}355 & 23{,}926 & 154{,}593 & 277{,}321 & 92{,}916\\
5 & 1 & 13 & 158 & 1{,}380 & 7{,}549 & 12{,}496 & 1{,}596 & 1\\
6 &  & 14 & 78 & 396 & 283 & 60 &  & \\
7 & 1 & 10 & 24 & 7 &  &  &  & \\
8 &  & 6 & 1 &  &  &  &  & \\
9 & 1 & 4 &  &  &  &  &  & \\
10 &  & 2 &  &  &  &  &  & \\
11 & 1 &  &  &  &  &  &  & \\
13 & 1 &  &  &  &  &  &  & \\
15 & 1 &  &  &  &  &  &  & \\
total & 8 & 93 & 772 & 9{,}265 & 88{,}102 & 660{,}611 & 2{,}438{,}724 & 3{,}970{,}828\\\hline
\noalign{\vskip 4pt}\hline
$d \backslash k$ & 9 & 10 & 11 & 12 & 13 & 14 & 15 & \\\hline
1 & 353{,}889 & 154{,}355 & 33{,}711 & 4{,}876 & 576 & 73 & 8 & \\
2 & 1{,}819{,}201 & 495{,}995 & 54{,}357 & 4{,}389 & 196 & 20 &  & \\
3 & 263{,}862 & 10{,}254 & 34 &  &  &  &  & \\
4 & 1{,}772 & 7 &  &  &  &  &  & \\
total & 2{,}438{,}724 & 660{,}611 & 88{,}102 & 9{,}265 & 772 & 93 & 8 & \\\hline
\end{tabular}
}
\end{table}
\begin{table}[htbp]\centering\scriptsize
\caption{Numbers $\Nn{2}{17}{k}(d)$ of inequivalent binary LCD $[17,k,d]$ codes}\label{tab:dist-2-17}
\adjustbox{max width=\textwidth}{\setlength{\tabcolsep}{2.2pt}\begin{tabular}{rrrrrrrrr}\hline
$d \backslash k$ & 1 & 2 & 3 & 4 & 5 & 6 & 7 & 8\\\hline
1 & 1 & 8 & 93 & 772 & 9{,}265 & 88{,}102 & 660{,}611 & 2{,}438{,}724\\
2 &  & 15 & 182 & 3{,}483 & 51{,}655 & 723{,}944 & 5{,}957{,}206 & 22{,}859{,}460\\
3 & 1 & 12 & 227 & 3{,}326 & 62{,}605 & 851{,}644 & 6{,}633{,}868 & 18{,}160{,}298\\
4 &  & 11 & 165 & 4{,}155 & 71{,}083 & 895{,}141 & 4{,}234{,}488 & 4{,}769{,}429\\
5 & 1 & 13 & 245 & 3{,}021 & 36{,}669 & 187{,}635 & 193{,}454 & 5{,}478\\
6 &  & 16 & 136 & 1{,}264 & 3{,}672 & 3{,}247 & 7 & 1\\
7 & 1 & 10 & 64 & 69 & 10 &  &  & \\
8 &  & 8 & 7 & 2 &  &  &  & \\
9 & 1 & 4 & 1 &  &  &  &  & \\
10 &  & 4 &  &  &  &  &  & \\
11 & 1 &  &  &  &  &  &  & \\
13 & 1 &  &  &  &  &  &  & \\
15 & 1 &  &  &  &  &  &  & \\
17 & 1 &  &  &  &  &  &  & \\
total & 9 & 101 & 1{,}120 & 16{,}092 & 234{,}959 & 2{,}749{,}713 & 17{,}679{,}634 & 48{,}233{,}390\\\hline
\noalign{\vskip 4pt}\hline
$d \backslash k$ & 9 & 10 & 11 & 12 & 13 & 14 & 15 & 16\\\hline
1 & 3{,}970{,}828 & 2{,}438{,}724 & 660{,}611 & 88{,}102 & 9{,}265 & 772 & 93 & 8\\
2 & 31{,}086{,}240 & 13{,}672{,}988 & 2{,}063{,}048 & 146{,}826 & 6{,}827 & 348 & 8 & 1\\
3 & 12{,}531{,}998 & 1{,}563{,}365 & 26{,}054 & 31 &  &  &  & \\
4 & 644{,}323 & 4{,}557 &  &  &  &  &  & \\
5 & 1 &  &  &  &  &  &  & \\
total & 48{,}233{,}390 & 17{,}679{,}634 & 2{,}749{,}713 & 234{,}959 & 16{,}092 & 1{,}120 & 101 & 9\\\hline
\end{tabular}
}
\end{table}
\begin{table}[htbp]\centering\scriptsize
\caption{Numbers $\Nn{2}{18}{k}(d)$ of inequivalent binary LCD $[18,k,d]$ codes}\label{tab:dist-2-18}
\adjustbox{max width=\textwidth}{\setlength{\tabcolsep}{2.2pt}\begin{tabular}{rrrrrrrrrr}\hline
$d \backslash k$ & 1 & 2 & 3 & 4 & 5 & 6 & 7 & 8 & 9\\\hline
1 & 1 & 9 & 101 & 1{,}120 & 16{,}092 & 234{,}959 & 2{,}749{,}713 & 17{,}679{,}634 & 48{,}233{,}390\\
2 &  & 16 & 217 & 5{,}185 & 99{,}507 & 2{,}142{,}635 & 29{,}232{,}096 & 203{,}803{,}355 & 515{,}869{,}568\\
3 & 1 & 14 & 265 & 5{,}332 & 134{,}157 & 3{,}008{,}494 & 42{,}772{,}124 & 250{,}389{,}507 & 423{,}580{,}984\\
4 &  & 12 & 212 & 7{,}185 & 183{,}615 & 4{,}279{,}715 & 45{,}886{,}374 & 148{,}355{,}591 & 83{,}518{,}774\\
5 & 1 & 16 & 317 & 6{,}318 & 135{,}114 & 1{,}760{,}690 & 7{,}202{,}330 & 3{,}202{,}236 & 11{,}858\\
6 &  & 18 & 209 & 3{,}539 & 27{,}879 & 103{,}883 & 18{,}611 & 338 & \\
7 & 1 & 14 & 118 & 459 & 505 & 11 &  &  & \\
8 &  & 10 & 23 & 22 &  &  &  &  & \\
9 & 1 & 9 & 6 &  &  &  &  &  & \\
10 &  & 6 &  &  &  &  &  &  & \\
11 & 1 & 2 &  &  &  &  &  &  & \\
13 & 1 &  &  &  &  &  &  &  & \\
15 & 1 &  &  &  &  &  &  &  & \\
17 & 1 &  &  &  &  &  &  &  & \\
total & 9 & 126 & 1{,}468 & 29{,}160 & 596{,}869 & 11{,}530{,}387 & 127{,}861{,}248 & 623{,}430{,}661 & 1{,}071{,}214{,}574\\\hline
\noalign{\vskip 4pt}\hline
$d \backslash k$ & 10 & 11 & 12 & 13 & 14 & 15 & 16 & 17 & \\\hline
1 & 48{,}233{,}390 & 17{,}679{,}634 & 2{,}749{,}713 & 234{,}959 & 16{,}092 & 1{,}120 & 101 & 9 & \\
2 & 418{,}581{,}422 & 101{,}135{,}934 & 8{,}717{,}899 & 361{,}883 & 13{,}068 & 348 & 25 &  & \\
3 & 152{,}075{,}458 & 9{,}039{,}677 & 62{,}769 & 27 &  &  &  &  & \\
4 & 4{,}540{,}391 & 6{,}003 & 6 &  &  &  &  &  & \\
total & 623{,}430{,}661 & 127{,}861{,}248 & 11{,}530{,}387 & 596{,}869 & 29{,}160 & 1{,}468 & 126 & 9 & \\\hline
\end{tabular}
}
\end{table}
\begin{table}[htbp]\centering\scriptsize
\caption{Numbers $\Nn{2}{19}{k}(d)$ of inequivalent binary LCD $[19,k,d]$ codes}\label{tab:dist-2-19}
\adjustbox{max width=\textwidth}{\setlength{\tabcolsep}{2.2pt}\begin{tabular}{rrrrrrrrrr}\hline
$d \backslash k$ & 1 & 2 & 3 & 4 & 5 & 6 & 7 & 8 & 9\\\hline
1 & 1 & 9 & 126 & 1{,}468 & 29{,}160 & 596{,}869 & 11{,}530{,}387 & 127{,}861{,}248 & 623{,}430{,}661\\
2 &  & 17 & 260 & 7{,}194 & 190{,}170 & 5{,}987{,}222 & 137{,}722{,}944 & 1{,}717{,}827{,}265 & 8{,}248{,}149{,}371\\
3 & 1 & 14 & 342 & 7{,}798 & 280{,}729 & 9{,}646{,}721 & 245{,}179{,}888 & 2{,}856{,}704{,}414 & 10{,}975{,}011{,}769\\
4 &  & 13 & 271 & 11{,}393 & 437{,}782 & 17{,}288{,}495 & 384{,}345{,}475 & 3{,}114{,}730{,}773 & 5{,}579{,}144{,}449\\
5 & 1 & 16 & 446 & 11{,}411 & 423{,}800 & 11{,}668{,}405 & 140{,}312{,}556 & 327{,}693{,}799 & 47{,}488{,}960\\
6 &  & 20 & 309 & 8{,}022 & 147{,}755 & 1{,}737{,}242 & 3{,}197{,}076 & 299{,}764 & 3\\
7 & 1 & 14 & 222 & 1{,}785 & 9{,}400 & 5{,}927 & 3 &  & \\
8 &  & 12 & 58 & 169 & 20 & 2 &  &  & \\
9 & 1 & 9 & 23 & 2 &  &  &  &  & \\
10 &  & 9 & 1 &  &  &  &  &  & \\
11 & 1 & 2 &  &  &  &  &  &  & \\
12 &  & 1 &  &  &  &  &  &  & \\
13 & 1 &  &  &  &  &  &  &  & \\
15 & 1 &  &  &  &  &  &  &  & \\
17 & 1 &  &  &  &  &  &  &  & \\
19 & 1 &  &  &  &  &  &  &  & \\
total & 10 & 136 & 2{,}058 & 49{,}242 & 1{,}518{,}816 & 46{,}930{,}883 & 922{,}288{,}329 & 8{,}145{,}117{,}263 & 25{,}473{,}225{,}213\\\hline
\noalign{\vskip 4pt}\hline
$d \backslash k$ & 10 & 11 & 12 & 13 & 14 & 15 & 16 & 17 & 18\\\hline
1 & 1{,}071{,}214{,}574 & 623{,}430{,}661 & 127{,}861{,}248 & 11{,}530{,}387 & 596{,}869 & 29{,}160 & 1{,}468 & 126 & 9\\
2 & 12{,}721{,}145{,}709 & 5{,}678{,}060{,}464 & 744{,}189{,}674 & 35{,}257{,}528 & 921{,}925 & 20{,}082 & 590 & 10 & 1\\
3 & 10{,}206{,}097{,}278 & 1{,}816{,}130{,}013 & 50{,}224{,}262 & 142{,}968 & 22 &  &  &  & \\
4 & 1{,}474{,}756{,}098 & 27{,}496{,}125 & 13{,}145 &  &  &  &  &  & \\
5 & 11{,}554 &  &  &  &  &  &  &  & \\
total & 25{,}473{,}225{,}213 & 8{,}145{,}117{,}263 & 922{,}288{,}329 & 46{,}930{,}883 & 1{,}518{,}816 & 49{,}242 & 2{,}058 & 136 & 10\\\hline
\end{tabular}
}
\end{table}
\begin{table}[htbp]\centering\scriptsize
\caption{Numbers $\Nn{3}{10}{k}(d)$ of inequivalent ternary LCD $[10,k,d]$ codes}\label{tab:dist-3-10}
\adjustbox{max width=\textwidth}{\setlength{\tabcolsep}{2.2pt}\begin{tabular}{rrrrrrrrrr}\hline
$d \backslash k$ & 1 & 2 & 3 & 4 & 5 & 6 & 7 & 8 & 9\\\hline
1 & 1 & 6 & 33 & 149 & 379 & 379 & 149 & 33 & 6\\
2 & 1 & 10 & 80 & 458 & 1{,}209 & 821 & 140 & 12 & 1\\
3 &  & 6 & 70 & 431 & 665 & 93 & 1 &  & \\
4 & 1 & 11 & 84 & 249 & 65 &  &  &  & \\
5 & 1 & 8 & 22 & 6 &  &  &  &  & \\
6 &  & 3 & 1 &  &  &  &  &  & \\
7 & 1 & 1 &  &  &  &  &  &  & \\
8 & 1 &  &  &  &  &  &  &  & \\
10 & 1 &  &  &  &  &  &  &  & \\
total & 7 & 45 & 290 & 1{,}293 & 2{,}318 & 1{,}293 & 290 & 45 & 7\\\hline
\end{tabular}
}
\end{table}
\begin{table}[htbp]\centering\scriptsize
\caption{Numbers $\Nn{3}{11}{k}(d)$ of inequivalent ternary LCD $[11,k,d]$ codes}\label{tab:dist-3-11}
\adjustbox{max width=\textwidth}{\setlength{\tabcolsep}{2.2pt}\begin{tabular}{rrrrrrrrrrr}\hline
$d \backslash k$ & 1 & 2 & 3 & 4 & 5 & 6 & 7 & 8 & 9 & 10\\\hline
1 & 1 & 7 & 45 & 290 & 1{,}293 & 2{,}318 & 1{,}293 & 290 & 45 & 7\\
2 & 1 & 11 & 117 & 1{,}094 & 5{,}943 & 9{,}509 & 3{,}016 & 265 & 15 & 1\\
3 &  & 7 & 117 & 1{,}382 & 6{,}606 & 4{,}765 & 202 &  &  & \\
4 & 1 & 14 & 175 & 1{,}516 & 2{,}916 & 177 &  &  &  & \\
5 & 1 & 11 & 88 & 228 & 11 &  &  &  &  & \\
6 &  & 6 & 13 & 1 &  &  &  &  &  & \\
7 & 1 & 4 &  &  &  &  &  &  &  & \\
8 & 1 &  &  &  &  &  &  &  &  & \\
10 & 1 &  &  &  &  &  &  &  &  & \\
11 & 1 &  &  &  &  &  &  &  &  & \\
total & 8 & 60 & 555 & 4{,}511 & 16{,}769 & 16{,}769 & 4{,}511 & 555 & 60 & 8\\\hline
\end{tabular}
}
\end{table}
\begin{table}[htbp]\centering\scriptsize
\caption{Numbers $\Nn{3}{12}{k}(d)$ of inequivalent ternary LCD $[12,k,d]$ codes}\label{tab:dist-3-12}
\adjustbox{max width=\textwidth}{\setlength{\tabcolsep}{2.2pt}\begin{tabular}{rrrrrrrrrrrr}\hline
$d \backslash k$ & 1 & 2 & 3 & 4 & 5 & 6 & 7 & 8 & 9 & 10 & 11\\\hline
1 & 1 & 8 & 60 & 555 & 4{,}511 & 16{,}769 & 16{,}769 & 4{,}511 & 555 & 60 & 8\\
2 & 1 & 12 & 166 & 2{,}415 & 27{,}291 & 110{,}420 & 85{,}175 & 11{,}062 & 486 & 18 & \\
3 &  & 8 & 178 & 3{,}747 & 47{,}198 & 141{,}910 & 36{,}481 & 431 &  &  & \\
4 & 1 & 17 & 320 & 6{,}316 & 54{,}825 & 45{,}763 & 440 &  &  &  & \\
5 & 1 & 14 & 230 & 2{,}824 & 5{,}040 & 8 &  &  &  &  & \\
6 &  & 9 & 79 & 147 &  &  &  &  &  &  & \\
7 & 1 & 8 & 8 &  &  &  &  &  &  &  & \\
8 & 1 & 2 &  &  &  &  &  &  &  &  & \\
10 & 1 &  &  &  &  &  &  &  &  &  & \\
11 & 1 &  &  &  &  &  &  &  &  &  & \\
total & 8 & 78 & 1{,}041 & 16{,}004 & 138{,}865 & 314{,}870 & 138{,}865 & 16{,}004 & 1{,}041 & 78 & 8\\\hline
\end{tabular}
}
\end{table}
\begin{table}[htbp]\centering\scriptsize
\caption{Numbers $\Nn{3}{13}{k}(d)$ of inequivalent ternary LCD $[13,k,d]$ codes}\label{tab:dist-3-13}
\adjustbox{max width=\textwidth}{\setlength{\tabcolsep}{2.2pt}\begin{tabular}{rrrrrrrrrrrrr}\hline
$d \backslash k$ & 1 & 2 & 3 & 4 & 5 & 6 & 7 & 8 & 9 & 10 & 11 & 12\\\hline
1 & 1 & 8 & 78 & 1{,}041 & 16{,}004 & 138{,}865 & 314{,}870 & 138{,}865 & 16{,}004 & 1{,}041 & 78 & 8\\
2 & 1 & 14 & 228 & 5{,}082 & 118{,}972 & 1{,}234{,}821 & 2{,}635{,}286 & 808{,}954 & 40{,}309 & 857 & 21 & 1\\
3 &  & 9 & 254 & 9{,}118 & 271{,}321 & 2{,}724{,}291 & 3{,}544{,}827 & 282{,}149 & 845 &  &  & \\
4 & 1 & 19 & 523 & 20{,}455 & 580{,}811 & 3{,}028{,}006 & 749{,}485 & 869 &  &  &  & \\
5 & 1 & 18 & 481 & 17{,}492 & 241{,}153 & 118{,}486 & 2 &  &  &  &  & \\
6 &  & 12 & 260 & 3{,}928 & 2{,}576 & 1 &  &  &  &  &  & \\
7 & 1 & 12 & 71 & 42 &  &  &  &  &  &  &  & \\
8 & 1 & 6 & 3 &  &  &  &  &  &  &  &  & \\
9 &  & 1 &  &  &  &  &  &  &  &  &  & \\
10 & 1 &  &  &  &  &  &  &  &  &  &  & \\
11 & 1 &  &  &  &  &  &  &  &  &  &  & \\
13 & 1 &  &  &  &  &  &  &  &  &  &  & \\
total & 9 & 99 & 1{,}898 & 57{,}158 & 1{,}230{,}837 & 7{,}244{,}470 & 7{,}244{,}470 & 1{,}230{,}837 & 57{,}158 & 1{,}898 & 99 & 9\\\hline
\end{tabular}
}
\end{table}
\begin{table}[htbp]\centering\scriptsize
\caption{Numbers $\Nn{3}{14}{k}(d)$ of inequivalent ternary LCD $[14,k,d]$ codes}\label{tab:dist-3-14}
\adjustbox{max width=\textwidth}{\setlength{\tabcolsep}{2.2pt}\begin{tabular}{rrrrrrrrrrrrrr}\hline
$d \backslash k$ & 1 & 2 & 3 & 4 & 5 & 6 & 7 & 8 & 9 & 10 & 11 & 12 & 13\\\hline
1 & 1 & 9 & 99 & 1{,}898 & 57{,}158 & 1{,}230{,}837 & 7{,}244{,}470 & 7{,}244{,}470 & 1{,}230{,}837 & 57{,}158 & 1{,}898 & 99 & 9\\
2 & 1 & 15 & 304 & 10{,}300 & 493{,}225 & 13{,}153{,}934 & 79{,}661{,}848 & 69{,}520{,}748 & 7{,}676{,}174 & 143{,}757 & 1{,}512 & 27 & 1\\
3 &  & 10 & 359 & 20{,}479 & 1{,}343{,}474 & 39{,}436{,}787 & 198{,}969{,}230 & 89{,}572{,}745 & 2{,}101{,}721 & 1{,}547 &  &  & \\
4 & 1 & 22 & 800 & 55{,}793 & 4{,}221{,}848 & 94{,}867{,}844 & 184{,}548{,}102 & 11{,}373{,}076 & 1{,}224 &  &  &  & \\
5 & 1 & 21 & 864 & 71{,}980 & 4{,}337{,}908 & 28{,}993{,}861 & 2{,}027{,}737 & 1 &  &  &  &  & \\
6 &  & 15 & 629 & 38{,}683 & 556{,}261 & 27{,}777 & 1 &  &  &  &  &  & \\
7 & 1 & 18 & 313 & 3{,}326 & 82 &  &  &  &  &  &  &  & \\
8 & 1 & 11 & 42 & 3 &  &  &  &  &  &  &  &  & \\
9 &  & 4 &  &  &  &  &  &  &  &  &  &  & \\
10 & 1 & 1 &  &  &  &  &  &  &  &  &  &  & \\
11 & 1 &  &  &  &  &  &  &  &  &  &  &  & \\
13 & 1 &  &  &  &  &  &  &  &  &  &  &  & \\
14 & 1 &  &  &  &  &  &  &  &  &  &  &  & \\
total & 10 & 126 & 3{,}410 & 202{,}462 & 11{,}009{,}956 & 177{,}711{,}040 & 472{,}451{,}388 & 177{,}711{,}040 & 11{,}009{,}956 & 202{,}462 & 3{,}410 & 126 & 10\\\hline
\end{tabular}
}
\end{table}
\begin{table}[htbp]\centering\scriptsize
\caption{Numbers $\Nn{3}{15}{k}(d)$ of inequivalent ternary LCD $[15,k,d]$ codes}\label{tab:dist-3-15}
\adjustbox{max width=\textwidth}{\setlength{\tabcolsep}{2.2pt}\begin{tabular}{rrrrrrrrrrrrrrr}\hline
$d \backslash k$ & 1 & 2 & 3 & 4 & 5 & 6 & 7 & 8 & 9 & 10 & 11 & 12 & 13 & 14\\\hline
1 & 1 & 10 & 126 & 3{,}410 & 202{,}462 & 11{,}009{,}956 & 177{,}711{,}040 & 472{,}451{,}388 & 177{,}711{,}040 & 11{,}009{,}956 & 202{,}462 & 3{,}410 & 126 & 10\\
2 & 1 & 16 & 397 & 19{,}968 & 1{,}949{,}661 & 132{,}834{,}055 & 2{,}275{,}673{,}661 & 5{,}825{,}231{,}594 & 1{,}841{,}139{,}542 & 70{,}497{,}520 & 496{,}652 & 2{,}587 & 30 & \\
3 &  & 11 & 485 & 43{,}123 & 6{,}010{,}882 & 477{,}468{,}337 & 7{,}861{,}066{,}726 & 15{,}212{,}487{,}275 & 2{,}147{,}203{,}752 & 14{,}745{,}849 & 2{,}539 &  &  & \\
4 & 1 & 25 & 1{,}162 & 135{,}256 & 24{,}092{,}655 & 1{,}810{,}556{,}267 & 18{,}624{,}833{,}927 & 10{,}851{,}384{,}988 & 151{,}084{,}410 & 1{,}070 &  &  &  & \\
5 & 1 & 24 & 1{,}406 & 228{,}277 & 43{,}330{,}282 & 1{,}787{,}093{,}295 & 3{,}441{,}936{,}321 & 19{,}737{,}026 &  &  &  &  &  & \\
6 &  & 18 & 1{,}240 & 211{,}402 & 20{,}200{,}423 & 98{,}176{,}824 & 70{,}596 &  &  &  &  &  &  & \\
7 & 1 & 24 & 913 & 59{,}220 & 468{,}029 & 10 &  &  &  &  &  &  &  & \\
8 & 1 & 16 & 258 & 997 & 1 &  &  &  &  &  &  &  &  & \\
9 &  & 8 & 10 &  &  &  &  &  &  &  &  &  &  & \\
10 & 1 & 4 &  &  &  &  &  &  &  &  &  &  &  & \\
11 & 1 &  &  &  &  &  &  &  &  &  &  &  &  & \\
13 & 1 &  &  &  &  &  &  &  &  &  &  &  &  & \\
14 & 1 &  &  &  &  &  &  &  &  &  &  &  &  & \\
total & 10 & 156 & 5{,}997 & 701{,}653 & 96{,}254{,}395 & 4{,}317{,}138{,}744 & 32{,}381{,}292{,}271 & 32{,}381{,}292{,}271 & 4{,}317{,}138{,}744 & 96{,}254{,}395 & 701{,}653 & 5{,}997 & 156 & 10\\\hline
\end{tabular}
}
\end{table}
\begin{table}[htbp]\centering\scriptsize
\caption{Numbers $\Nn{4}{13}{k}(d)$ of inequivalent quaternary Hermitian LCD $[13,k,d]$ codes}\label{tab:dist-4-13}
\adjustbox{max width=\textwidth}{\setlength{\tabcolsep}{2.2pt}\begin{tabular}{rrrrrrrrrrrrr}\hline
$d \backslash k$ & 1 & 2 & 3 & 4 & 5 & 6 & 7 & 8 & 9 & 10 & 11 & 12\\\hline
1 & 1 & 6 & 95 & 3{,}654 & 439{,}007 & 14{,}405{,}773 & 52{,}408{,}453 & 14{,}405{,}773 & 439{,}007 & 3{,}654 & 95 & 6\\
2 &  & 11 & 220 & 29{,}160 & 4{,}210{,}379 & 168{,}471{,}619 & 574{,}442{,}384 & 138{,}556{,}387 & 2{,}332{,}236 & 5{,}732 & 27 & 1\\
3 & 1 & 13 & 651 & 98{,}177 & 21{,}401{,}673 & 787{,}991{,}670 & 2{,}236{,}266{,}988 & 220{,}167{,}792 & 439{,}130 & 4 &  & \\
4 &  & 15 & 1{,}030 & 486{,}075 & 107{,}775{,}433 & 3{,}120{,}432{,}100 & 2{,}385{,}253{,}086 & 14{,}873{,}306 & 2 &  &  & \\
5 & 1 & 17 & 2{,}685 & 1{,}171{,}975 & 218{,}245{,}841 & 1{,}164{,}375{,}034 & 7{,}448{,}980 &  &  &  &  & \\
6 &  & 27 & 2{,}637 & 1{,}294{,}025 & 35{,}930{,}366 & 143{,}695 &  &  &  &  &  & \\
7 & 1 & 15 & 1{,}914 & 127{,}290 & 559 &  &  &  &  &  &  & \\
8 &  & 13 & 156 & 19 &  &  &  &  &  &  &  & \\
9 & 1 & 4 & 2 &  &  &  &  &  &  &  &  & \\
10 &  & 1 &  &  &  &  &  &  &  &  &  & \\
11 & 1 &  &  &  &  &  &  &  &  &  &  & \\
13 & 1 &  &  &  &  &  &  &  &  &  &  & \\
total & 7 & 122 & 9{,}390 & 3{,}210{,}375 & 388{,}003{,}258 & 5{,}255{,}819{,}891 & 5{,}255{,}819{,}891 & 388{,}003{,}258 & 3{,}210{,}375 & 9{,}390 & 122 & 7\\\hline
\end{tabular}
}
\end{table}
\clearpage
\section*{Acknowledgements}
This work was supported by JSPS KAKENHI Grant Number JP25K17290.

\section*{Data availability}
Generator matrices of representatives of all equivalence classes of the LCD codes classified in
this paper, in a plain text format, are available in ten Zenodo records ($364$~GB in total); the
record~\cite{data} describes the format and lists all of them.

\bibliographystyle{abbrv}
\bibliography{refs}

\end{document}